\documentclass[onecolumn,superscriptaddress,aps]{revtex4-2}
\expandafter\let\csname equation*\endcsname\relax
\expandafter\let\csname endequation*\endcsname\relax
\usepackage{amsmath}
\usepackage{graphicx}
\usepackage{amsthm}
\usepackage{amssymb}
\usepackage{latexsym}
\usepackage{array}
\usepackage{hyperref}
\usepackage{amsfonts}
\usepackage{dsfont}
\usepackage{algpseudocode}
\usepackage{mathrsfs}
\usepackage{verbatim}
\usepackage{bbold}
\usepackage{lipsum}
\usepackage[normalem]{ulem}

\usepackage{upgreek}

\usepackage{makecell}
\usepackage{adjustbox,lipsum}

\usepackage{algorithm2e}
\usepackage{algpseudocode}
\usepackage{placeins}

\usepackage{color}

\usepackage{bm}
\usepackage{times}

\newcommand{\bra}[1]{\left<#1\right|}
\newcommand{\ket}[1]{\left|#1\right>}
\newcommand{\abs}[1]{\left|#1\right|}

\newcommand{\braket}[2]{\left<{#1}|{#2}\right>}
\newcommand{\ketbra}[2]{\ket{#1}\!\!\bra{#2}}

\newtheorem{proposition}{Proposition}

\begin{document}

\title{First-Quantized Relativistic Quantum Simulation with Periodic and Dirichlet Boundary Conditions}

\author{Jeongho Bang}
\address{Institute for Convergence Research and Education in Advanced Technology, Yonsei University, Seoul 03722, Republic of Korea}
\address{Department of Quantum Information, Yonsei University, Incheon 21983, Republic of Korea}

\author{Timothy P. Spiller}
\address{York Centre for Quantum Technologies, School of Physics, Engineering and Technology, University of York, York, YO10 5DD, UK}

\author{Kyunghyun Baek}
\address{Institute for Convergence Research and Education in Advanced Technology, Yonsei University, Seoul 03722, Republic of Korea}
\address{Department of Quantum Information, Yonsei University, Incheon 21983, Republic of Korea}

\author{Kyoungho Cho}
\address{Institute for Convergence Research and Education in Advanced Technology, Yonsei University, Seoul 03722, Republic of Korea}
\address{Department of Statistics and Data Science, Yonsei University, Seoul 03722, Republic of Korea}

\author{Jaewoo Joo}
\address{School of Computing, Mathematics and Physics, University of Portsmouth, Portsmouth PO1 3QL, UK}

\date{\today}

\begin{abstract}
In this work, we present a methodology for first-quantized relativistic quantum simulation on one-dimensional finite domains under the two boundary conditions most commonly used in lattice models: periodic boundary conditions (PBC) and Dirichlet boundary conditions (DBC). Starting from the positive-energy relativistic kinetic operator, we construct weakly relativistic lattice Hamiltonians whose leading correction requires the boundary-consistent discretized momentum moments \(\langle \hat{P}^{2}\rangle\) and \(\langle \hat{P}^{4}\rangle\). These moments are reconstructed in the PBC Hamiltonian from moments of a unitary cyclic translation while the DBC Hamiltonian uses the open-chain finite-difference. In a qubit-register implementation, it can be evaluated as the corresponding cyclic translation estimator plus boundary-local terms that remove the unphysical wrap-around link. The resulting energy-estimation workflow uses translation measurements for the kinetic terms,
a small number of endpoints and near-endpoints overlap probabilities for DBC, and position-basis sampling for diagonal potentials. 
We validate the framework of the relativistic quantum simulation in various benchmark potentials such as no potential and a cosine potential for PBC as well as an infinite square well and a harmonic potential for DBC, with finite-shot sampling tests. These benchmarks show good agreement between the estimator reconstruction and direct matrix evaluation while separating the finite-grid discretization, weak-relativistic truncation, and measurement errors. 
\end{abstract}

\maketitle


\section{Introduction}\label{sec:intro}

Quantum simulation is a promising approach to continuum quantum dynamics, where a useful simulation must represent not only a Hilbert space but also the Hamiltonian that defines the physical problem~\cite{Feynman1982,Lloyd1996,Georgescu2014,Altman2021,Daley2022,Fauseweh2024}. In this setting, relativistic quantum simulation is a natural target because the kinetic energy is no longer simply quadratic in momentum. Even for a single positive-energy particle, the relativistic kinetic operator is a nonlinear function of the momentum, and therefore a lattice simulator must specify how the corresponding momentum moments are represented on a finite domain~\cite{Bauer2023,DiMeglio2024,Sakurai2020,Gerritsma2010,Li2023}.

The relativistic kinetic term considered here is the positive-energy single-particle operator
\begin{eqnarray}
\hat{T}_{\rm rel} &=& mc^2 \left( \sqrt{\hat{\mathds{1}} + \frac{\hat{p}^{2}}{m^2 c^2}} - \hat{\mathds{1}} \right) = - m c^2 \sum_{k=1}^{\infty} \chi_k \, \hat{p}^{2k},
\label{eq:intro_positive_energy_operator} \end{eqnarray}
for 
$\chi_k = \frac{(-1)^{k} (2k-3)!!}{2^k k! \, \left( m\, c \right)^{2k}}$. In the weakly relativistic regime, this has the expansion
\begin{eqnarray}
\hat{T}_{\rm rel} = \frac{\hat{p}^{2}}{2m} - \frac{\hat{p}^{4}}{8 m^3 c^2} + O\left( \frac{\hat{p}^{6}}{m^5 c^4} \right),
\label{eq:intro_relativistic_expansion}
\end{eqnarray}
for continuum relativistic momentum operator $\hat{p}$.
Thus, a leading-order relativistic quantum simulation must estimate not only the first kinetic energy term with $\langle\hat{p}^{2}\rangle$, but also the second one $\langle\hat{p}^{4}\rangle$~\cite{NoteLatticeMoments}. Therefore, the main question addressed in this work is how to define and estimate those moments consistently when the finite-domain relativistic Hamiltonian is subject to either periodic boundary conditions (PBC) or Dirichlet boundary conditions (DBC)~\cite{Sakurai2020,Childs2022RealSpace,Costa2019Wave,Childs2021PDE}.

The boundary conditions are part of the definition of the finite-domain Hamiltonian, rather than a detail to be added after discretization. Under PBC, the spatial domain is closed into a circle, so the finite-difference link connecting the last grid point back to the first is physical. Under DBC, the wavefunction vanishes at the endpoints of an interval, and the physical finite-difference operator is an open-chain operator with no wrap-around link. 
A relativistic quantum simulator for PBC and DBC therefore corresponds to different lattice momentum operators and, consequently, to different moments. Thus, here we treat PBC and DBC as two boundary-condition-resolved relativistic quantum-simulation problems.

First-quantized grid encodings provide the common qubit-register language. For an $L$-qubit position register, a 1D wavefunction is represented as $\ket{\psi} = \sum_{j=0}^{2^L - 1} c_j\ket{j}$, where the computational basis state $\ket{j}$ labels a grid point. In this representation, diagonal potentials are evaluated from the position-basis probabilities, while the derivative operators are expressed through finite-difference shift operators on the register~\cite{Costa2019Wave,Childs2021PDE,Kassal2008,Babbush2018LinearT,Babbush2019,Su2021,Babbush2023,Berry2024,Georges2025,Childs2022RealSpace,Lubasch2020,JJ2021,JooSpiller2023}. The qubit-encoding issue is that the natural shift operation on an $L$-qubit register is cyclic. This is exactly the right unitary operation for PBC, but for DBC the same cyclic shift contains an unphysical wrap-around coupling. In the methodology developed below, this implementation issue is handled after the target PBC or DBC relativistic Hamiltonian has been fixed: the cyclic translation moments are used to evaluate the bulk finite-difference contribution, while DBC is enforced by boundary-local correction terms which can remove the wrap-around link.

With the above-described perspectives, we develop the first-quantized relativistic quantum-simulation framework under PBC and DBC. We first define the weakly relativistic lattice Hamiltonian for each boundary condition, with the fourth-order kinetic moment understood as the square of the same boundary-consistent momentum-squared operator. We then derive measurement estimators for the energy terms. The PBC case requires translation moments up to second order, and the DBC one uses the same translation moments together with a finite set of endpoint probabilities and endpoint or near-endpoint coherences. The diagonal potentials are estimated by position sampling and classical post-processing. In this way, the qubit-register treatment of boundary conditions supports our relativistic quantum simulation under PBC and DBC.

Our framework for relativistic quantum simulation is validated analytically and numerically. A periodic free particle tests the PBC lattice dispersion and its convergence to the continuum relativistic energy. A Dirichlet infinite square well tests the open-chain DBC spectrum and the boundary-local reconstruction of the kinetic moments that enter the first and second leading relativistic kinetic-energy terms. The smooth-potential ground states verify the full energy-estimation workflow with diagonal potentials, and finite-shot sampling tests confirm the expected inverse-square-root statistical scaling with the number of measurement shots. These benchmarks keep distinct the finite-grid discretization error, the weak-relativistic truncation error, and the statistical error of the estimator.

The contributions of this work are threefold. First, we formulate first-quantized weakly relativistic lattice Hamiltonians for finite-domain quantum simulation under PBC and DBC. Second, we derive observable estimators that reconstruct the second-order and fourth-order lattice momentum moments required by the weak-relativistic expansion from translation measurements and, for DBC, boundary-local overlap measurements. Third, we validate the construction with analytically solvable PBC and DBC systems, smooth-potential benchmarks, and finite-shot simulations. Together, these results establish a compact methodology for relativistic quantum simulation with PBC and DBC.

\section{Relativistic Lattice Hamiltonians under Periodic and Dirichlet Boundary Conditions}\label{sec:lattice_hamiltonian}

We now define the lattice Hamiltonians that serve as the targets for relativistic simulations with PBC and DBC. The same first-quantized position register is used in both cases, but the finite-difference momentum operator is chosen to match the physical boundary condition. We denote the continuum momentum operator by $\hat{p}$ and the corresponding lattice momentum operator by $\hat{P}_{\tau}$ with $\tau \in \{{\rm P}, {\rm D}\}$ specifying PBC or DBC. 

\subsection{First-quantized grids and weakly relativistic kinetic terms}\label{subsec:grid_rel_expansion}

We consider a single particle in one spatial dimension on a finite interval of physical length $R$. An $L$-qubit position register represents $N=2^L$ grid amplitudes, with a general lattice state written as
\begin{eqnarray}
\ket{\psi} = \sum_{j=0}^{N-1} c_j \ket{j},
\label{eq:lattice_state}
\end{eqnarray}
where $\sum_{j=0}^{N-1} \abs{c_j}^2 = 1$. Here, the computational basis state $\ket{j}$ labels a spatial grid point. The map from $j$ to the physical coordinate is part of the finite-domain simulation specification and differs for PBC and DBC.

For PBC, the grid points are taken as, for $j=0,\ldots,N-1$,
\begin{eqnarray}
x_j^{({\rm P})} = j\Delta_{\rm P},
\quad
\Delta_{\rm P} = \frac{R}{N}.
\label{eq:pbc_grid}
\end{eqnarray}
The last point is connected back to the first point, so that the finite difference is naturally represented by a cyclic translation on the register.

For DBC, we instead use $N$ interior grid points, for $j=0,\ldots,N-1$, 
\begin{eqnarray}
x_j^{({\rm D})} = (j+1)\Delta_{\rm D},
\quad
\Delta_{\rm D} = \frac{R}{N+1},
\label{eq:dbc_grid}
\end{eqnarray}
together with the virtual boundary values, 
\begin{eqnarray}
\psi(0)=0,
\quad
\psi(R)=0 .
\label{eq:dbc_virtual_boundaries}
\end{eqnarray}
This convention implements DBC as an open-chain finite-difference operator. The endpoints of the computational register, $\ket{0}$ and $\ket{N-1}$, therefore represent the first and last interior grid points, not the physical boundary points themselves. The corresponding physical-grid and dimensional conventions are summarized in \ref{app:grid_conventions}.

In the weakly relativistic regime, where the relevant states have momentum support small compared with $mc$, we use the expansion in Eq.~(\ref{eq:intro_relativistic_expansion}). 
On the lattice, the corresponding boundary-dependent kinetic estimator through the leading relativistic correction is given by  
\begin{eqnarray}
\hat{T}_{\tau} = \frac{\hat{P}^{2}_{\tau}}{2m} - \frac{\hat{P}^{4}_{\tau}}{8 m^3 c^2},
\quad
\tau \in \{{\rm P}, {\rm D}\},
\label{eq:T_rel_lattice_tau}
\end{eqnarray}
where $\tau={\rm P}$ and $\tau={\rm D}$ denote PBC and DBC, respectively. Here and below,
\begin{eqnarray}
\hat{P}^{4}_{\tau} \equiv \left(\hat{P}^{2}_{\tau}\right)^2.
\label{eq:P4_definition}
\end{eqnarray}
Thus, the relativistic correction is built from the same boundary-consistent lattice momentum-squared operator used for the first kinetic term.

For a scalar potential $V(x)$, the lattice potential is diagonal in the
position basis,
\begin{eqnarray}
\hat{V}_{\tau} = \sum_{j=0}^{N-1} V\bigl(x_j^{(\tau)}\bigr) \ketbra{j}{j}.
\label{eq:lattice_potential_sec2}
\end{eqnarray}
The corresponding total lattice Hamiltonian used for energy estimation is therefore
\begin{eqnarray}
\hat{H}^{tot}_{\tau} &=& \hat{T}_{\tau} + \hat{V}_{\tau}.
\label{eq:lattice_hamiltonian_tau}
\end{eqnarray}


\subsection{Relativistic kinetic lattice Hamiltonian in PBC} \label{subsec:pbc_lattice_hamiltonian}

For PBC, we introduce the unitary cyclic translation operator $\hat{A}$ known as a quantum adder/subtractor \cite{Vedral1996},
\begin{eqnarray}
\hat{A}\ket{j} &=& \ket{j+1 \ {\rm mod} \ N}, \nonumber \\
\hat{A}^{\dagger}\ket{j} &=& \ket{j-1 \ {\rm mod} \ N}.
\label{eq:cyclic_translation}
\end{eqnarray}
It satisfies
\begin{eqnarray}
\hat{A}^{\dagger}\hat{A} = \hat{A}\hat{A}^{\dagger} = \hat{\mathds{1}},
\quad
\hat{A}^{N} = \hat{\mathds{1}}.
\label{eq:cyclic_translation_unitarity}
\end{eqnarray}
The standard second-order finite-difference representation of the momentum-squared operator is given by a lattice Laplacian operator then
\begin{eqnarray}
\hat{P}^{2}_{\rm P} = - \frac{\hbar^2}{\Delta_{\rm P}^{2}} \left( \hat{A} + \hat{A}^{\dagger} - 2\hat{\mathds{1}} \right).
\label{eq:P2_pbc_operator}
\end{eqnarray}
For a lattice state $\ket{\psi}$, this gives
\begin{eqnarray}
\langle \hat{P}^{2}_{\rm P} \rangle = \frac{2 \hbar^2}{\Delta_{\rm P}^{2}} \left( 1- \mathrm{Re} \langle\hat{A}\rangle \right),
\label{eq:P2_pbc_expectation}
\end{eqnarray}
where $\mathrm{Re} \langle\hat{A}\rangle = \left( \langle \hat{A} \rangle + \langle \hat{A}^{\dagger} \rangle\right)/2$.
The fourth-order momentum moment is obtained by squaring Eq.~(\ref{eq:P2_pbc_operator}). Since $\hat{A}$ is unitary,
\begin{eqnarray}
\hat{P}^{4}_{\rm P} = \frac{\hbar^4}{\Delta_{\rm P}^{4}} \left( \hat{A}^{2} + \hat{A}^{\dagger 2} - 4\hat{A} - 4\hat{A}^{\dagger} + 6\hat{\mathds{1}} \right),
\label{eq:P4_pbc_operator}
\end{eqnarray}
and, therefore,
\begin{eqnarray}
\langle \hat{P}^{4}_{\rm P} \rangle = \frac{2 \hbar^4}{\Delta_{\rm P}^{4}} \left( \mathrm{Re} \langle\hat{A}^2\rangle - 4 \mathrm{Re} \langle\hat{A}\rangle + 3 \right).
\label{eq:P4_pbc_expectation}
\end{eqnarray}
Eqs~(\ref{eq:P2_pbc_expectation}) and~(\ref{eq:P4_pbc_expectation}) show that the leading relativistic correction under PBC requires only translation moments up to second order. In particular, $\langle\hat{P}^{2}_{\rm P}\rangle$ depends on $\langle\hat{A}\rangle$, while $\langle\hat{P}^{4}_{\rm P}\rangle$ depends on both $\langle\hat{A}\rangle$ and $\langle\hat{A}^{2}\rangle$.

The PBC kinetic energy through order $\hat{P}^{4}$ is thus
\begin{eqnarray}
\hspace{-1.5cm} \langle \hat{T}_{\rm P} \rangle = \frac{1}{2m} \langle \hat{P}^{2}_{\rm P} \rangle - \frac{1}{8 m^3 c^2} \langle \hat{P}^{4}_{\rm P} \rangle.
\label{eq:T_pbc_expectation}
\end{eqnarray}
This expression is the PBC kinetic component of the relativistic simulation framework and will be used in the next Sec.~\ref{sec:quantum_estimators} to define the translation-moment measurement protocol.

\subsection{Relativistic kinetic lattice Hamiltonian in DBC} \label{subsec:dbc_lattice_hamiltonian}

For DBC, the target finite-difference Hamiltonian is the open-chain Hamiltonian associated with the interior grid in Eq.~(\ref{eq:dbc_grid}). In a qubit-register implementation, it is convenient to express this open-chain operator using the same unitary cyclic translation $\hat{A}$, supplemented by boundary-local terms that remove the wrap-around link. On the $N$-point interior grid, we define
\begin{eqnarray}
\hat{E}_0 = \ketbra{N-1}{0} + \ketbra{0}{N-1}.
\label{eq:E0_definition}
\end{eqnarray}
This operator $\hat{E}_0$ identifies the coupling between the first and last interior grid points that would be present in a cyclic register translation but is absent from the DBC Hamiltonian.

The DBC momentum-squared operator is therefore
\begin{eqnarray}
\hat{P}^{2}_{\rm D} = -\frac{\hbar^2}{\Delta_{\rm D}^{2}} \left( \hat{A} + \hat{A}^{\dagger} - 2\hat{\mathds{1}} - \hat{E}_0 \right).
\label{eq:P2_dbc_operator}
\end{eqnarray}
Equivalently, if
\begin{eqnarray}
\hat{P}^{2}_{\rm cyc}(\Delta_{\rm D}) = -\frac{\hbar^2}{\Delta_{\rm D}^{2}} \left( \hat{A} + \hat{A}^{\dagger} - 2\hat{\mathds{1}} \right)
\label{eq:P2_cyc_dgrid}
\end{eqnarray}
denotes the cyclic finite-difference operator evaluated with the DBC grid
spacing, then
\begin{eqnarray}
\hat{P}^{2}_{\rm D} = \hat{P}^{2}_{\rm cyc}(\Delta_{\rm D}) + \frac{\hbar^2}{\Delta_{\rm D}^{2}} \hat{E}_0.
\label{eq:P2_dbc_correction_operator}
\end{eqnarray}
Hence,
\begin{eqnarray}
\langle \hat{P}^{2}_{\rm D} \rangle = \langle \hat{P}^{2}_{\rm cyc}(\Delta_{\rm D}) \rangle + \frac{\hbar^2}{\Delta_{\rm D}^{2}} \langle \hat{E}_0 \rangle,
\label{eq:P2_dbc_expectation}
\end{eqnarray}
where
\begin{eqnarray}
\langle \hat{E}_0 \rangle = c_{N-1}^{\ast} c_0 + c_0^{\ast} c_{N-1}.
\label{eq:E0_expectation}
\end{eqnarray}

The fourth-order DBC moment is defined as
\begin{eqnarray}
\hat{P}^{4}_{\rm D} = \left( \hat{P}^{2}_{\rm D} \right)^2.
\label{eq:P4_dbc_definition}
\end{eqnarray}
Using Eq.~(\ref{eq:P2_dbc_operator}), one obtains
\begin{eqnarray}
\langle \hat{P}^{4}_{\rm D} \rangle &=& \langle \hat{P}^{4}_{\rm cyc}(\Delta_{\rm D}) \rangle  + \frac{\hbar^4}{\Delta_{\rm D}^{4}} \left( 4\langle \hat{E}_0\rangle - \langle \hat{E}_1\rangle - \langle \hat{E}_2\rangle + \langle \hat{E}_0^2\rangle \right),
\label{eq:P4_dbc_expectation}
\end{eqnarray}
where
\begin{eqnarray}
\hat{P}^{4}_{\rm cyc}(\Delta_{\rm D}) = \frac{\hbar^4}{\Delta_{\rm D}^{4}} \left( \hat{A}^{2} + \hat{A}^{\dagger 2} - 4\hat{A} - 4\hat{A}^{\dagger} + 6\hat{\mathds{1}} \right),
\label{eq:P4_cyc_dgrid}
\end{eqnarray}
and the additional boundary operators are
\begin{eqnarray}
\hat{E}_1 = \hat{A}\hat{E}_0 + \hat{E}_0\hat{A}^{\dagger},
\quad
\hat{E}_2 = \hat{E}_0\hat{A} + \hat{A}^{\dagger}\hat{E}_0.
\label{eq:E1_E2_definitions}
\end{eqnarray}
The derivation of Eq.~(\ref{eq:P4_dbc_expectation}) follows by expanding $(\hat{A}+\hat{A}^{\dagger}-2\hat{\mathds{1}}-\hat{E}_0)^2$ and collecting the cyclic part and the boundary-local part. The full algebra
is given in \ref{app:boundary_moments}.

For later use, we record the explicit endpoint form of the boundary operators. For $N>2$,
\begin{eqnarray}
\hat{E}_0^2 = \ketbra{0}{0} + \ketbra{N-1}{N-1},
\label{eq:E0_squared_explicit}
\end{eqnarray}
and the compact definitions in Eq.~(\ref{eq:E1_E2_definitions}) give
\begin{eqnarray}
\hat{E}_1 &=& 2\ketbra{0}{0} + \ketbra{1}{N-1} + \ketbra{N-1}{1}, \nonumber \\
\hat{E}_2 &=& 2\ketbra{N-1}{N-1} + \ketbra{0}{N-2} + \ketbra{N-2}{0}.~~~
\label{eq:E12_explicit}
\end{eqnarray}
Accordingly,
\begin{eqnarray}
\langle \hat{E}_0^2\rangle &=& \abs{c_0}^2 + \abs{c_{N-1}}^2, \nonumber \\
\langle \hat{E}_1\rangle &=& 2\abs{c_0}^2 + c_1^{\ast} c_{N-1} + c_{N-1}^{\ast} c_1, \nonumber \\
\langle \hat{E}_2\rangle &=& 2\abs{c_{N-1}}^2 + c_0^{\ast} c_{N-2} + c_{N-2}^{\ast} c_0.
\label{eq:E012_expectation}
\end{eqnarray}
Thus, the DBC part of the implementation is boundary-local: it involves only endpoint probabilities and a small number of endpoint or near-endpoint coherences.

The DBC kinetic energy through order $\hat{P}^{4}$ is
\begin{eqnarray}
\langle \hat{T}_{\rm D} \rangle = \frac{1}{2m} \langle \hat{P}^{2}_{\rm D} \rangle - \frac{1}{8m^3c^2} \langle \hat{P}^{4}_{\rm D} \rangle.
\label{eq:T_dbc_expectation}
\end{eqnarray}
Combining Eqs.~(\ref{eq:P2_dbc_expectation}) and (\ref{eq:P4_dbc_expectation}), this can be written as a cyclic translation-moment contribution plus a boundary correction:
\begin{eqnarray}
\hspace{-2cm} \langle \hat{T}_{\rm D} \rangle &=&  \langle \hat{T}_{\rm cyc}(\Delta_{\rm D}) \rangle + \frac{\hbar^2}{2m\Delta_{\rm D}^{2}} \langle \hat{E}_0\rangle  - \frac{\hbar^4}{8m^3c^2\Delta_{\rm D}^{4}} \left( 4\langle \hat{E}_0\rangle - \langle \hat{E}_1\rangle - \langle \hat{E}_2\rangle + \langle \hat{E}_0^2\rangle \right).~~~~~~~
\label{eq:T_dbc_correction_form}
\end{eqnarray}
This form completes the DBC Hamiltonian construction. The bulk contribution is evaluated by the same translation moments used for the PBC simulator, while the difference between DBC and a cyclic register translation is captured by a finite set of boundary-local expectation values.

\section{Quantum Estimators for PBC and DBC Relativistic Simulation}\label{sec:quantum_estimators}

We now convert the PBC and DBC relativistic lattice Hamiltonians of Sec.~\ref{sec:lattice_hamiltonian} into observable estimators. The aim is to specify the energy-evaluation layer of the quantum simulation, independent of any particular state-preparation ansatz. Once a first-quantized lattice state $\ket{\psi}$ as in Eq.~(\ref{eq:lattice_state}) has been prepared, the weakly relativistic energy through $\hat{P}^{4}$ is reconstructed from a small set of expectation values. PBC uses translation moments of the cyclic shift operator. DBC uses the same translation moments and adds the boundary-local overlap probabilities required by the open-chain Hamiltonian.

Throughout this section, an estimator obtained from finitely many shots will be denoted by a wide-tilde, for example $\widetilde{m}_l$. The statistical analysis of the finite-shot fluctuations is given in \ref{app:shot_noise}.

\subsection{PBC translation-moment estimator}\label{subsec:translation_moment_estimator}

For PBC, the kinetic moments in Eqs.~(\ref{eq:P2_pbc_expectation}) and~(\ref{eq:P4_pbc_expectation}) depend on the real parts of $\langle\hat{A}\rangle$ and $\langle\hat{A}^2\rangle$. Here, we define
\begin{eqnarray}
m_l = \mathrm{Re}\bra{\psi}\hat{A}^{l}\ket{\psi},
\quad
l=1,2.
\label{eq:ml_definition}
\end{eqnarray}
Since $\langle \hat{A}^{l} + \hat{A}^{\dagger l} \rangle = 2\mathrm{Re} \langle\hat{A}^{l}\rangle = 2 m_l$, the PBC momentum moments can be written as
\begin{eqnarray}
\langle \hat{P}_{\rm P}^{2}\rangle &=& \frac{2\hbar^2}{\Delta_{\rm P}^{2}} \left( 1 - m_1 \right),
\label{eq:P2_pbc_m1}
\\
\langle \hat{P}_{\rm P}^{4}\rangle &=& \frac{\hbar^4}{\Delta_{\rm P}^{4}} \left( 2m_2 - 8m_1 + 6 \right).
\label{eq:P4_pbc_m1_m2}
\end{eqnarray}
Thus, the leading relativistic kinetic energy under PBC in Eq.~(\ref{eq:T_pbc_expectation}) is reconstructed as
\begin{eqnarray}
\hspace{-2cm} \langle \hat{T}_{\rm P} \rangle &=& \alpha \left( 1 - m_1 \right) - \frac{\beta}{4} \left( m_2 - 4m_1 + 3 \right)
= \left( \alpha - \frac{3\beta}{4} \right) + 
\left( \beta - \alpha \right) m_1
- \frac{ \beta}{4} m_2 .
\label{eq:T_pbc_moments}
\end{eqnarray}
where $\alpha = \hbar^2/(m \Delta^2_{\rm P})$ and $\beta = \hbar^4/(m^3 c^2 \Delta^4_{\rm P})$.

The moments $m_l$ are obtained from the standard controlled-unitary interferometric primitive~\cite{Ekert2002,Alves2003}. We state the identity explicitly because it is the basic observable estimator used throughout this work.

\begin{proposition}[Translation-moment Hadamard estimator]
\label{prop:hadamard_translation}
Let $\hat{U}$ be a unitary operator acting on the system register and let $\ket{\psi}$ be the system state. Initialise an ancilla qubit in $\ket{0}$, apply a Hadamard gate $H$, apply controlled-$\hat{U}$, apply a second Hadamard gate to the ancilla, and measure the ancilla in the $Z$ basis. Then,
\begin{eqnarray}
\langle \hat{Z}_C \rangle = \mathrm{Re}\bra{\psi}\hat{U}\ket{\psi}.
\label{eq:hadamard_estimator_identity}
\end{eqnarray}
In particular, choosing $\hat{U}=\hat{A}^{l}$ gives $\langle\hat{Z}_C\rangle=m_l$.
\end{proposition}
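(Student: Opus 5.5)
The plan is to track the joint ancilla--system state through the circuit explicitly and then read off the ancilla $Z$ expectation value. Start from $\ket{0}_C\ket{\psi}$. After the first Hadamard the state is $\tfrac{1}{\sqrt{2}}(\ket{0}_C+\ket{1}_C)\ket{\psi}$. The controlled-$\hat{U}$ acts as the identity on the $\ket{0}_C$ branch and as $\hat{U}$ on the $\ket{1}_C$ branch, giving $\tfrac{1}{\sqrt{2}}(\ket{0}_C\ket{\psi}+\ket{1}_C\hat{U}\ket{\psi})$. Applying the second Hadamard and regrouping by ancilla basis state yields
\begin{eqnarray}
\ket{\Phi} = \ket{0}_C\,\tfrac{1}{2}\left(\hat{\mathds{1}}+\hat{U}\right)\ket{\psi} + \ket{1}_C\,\tfrac{1}{2}\left(\hat{\mathds{1}}-\hat{U}\right)\ket{\psi}.
\nonumber
\end{eqnarray}

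Next I will compute the outcome probabilities $p_0$ and $p_1$ as the squared norms of the two system branches. Expanding $\tfrac14\bra{\psi}(\hat{\mathds{1}}\pm\hat{U}^\dagger)(\hat{\mathds{1}}\pm\hat{U})\ket{\psi}$, unitarity $\hat{U}^\dagger\hat{U}=\hat{\mathds{1}}$ (together with normalization of $\ket{\psi}$) collapses the diagonal terms to $2$. The cross terms combine as $\bra{\psi}\hat{U}\ket{\psi}+\bra{\psi}\hat{U}^\dagger\ket{\psi}=2\,\mathrm{Re}\bra{\psi}\hat{U}\ket{\psi}$. This gives
\begin{eqnarray}
p_{0,1}=\tfrac{1}{2}\left(1\pm \mathrm{Re}\bra{\psi}\hat{U}\ket{\psi}\right).
\nonumber
\end{eqnarray}
Then $\langle \hat{Z}_C\rangle = p_0-p_1$, which is Eq.~(\ref{eq:hadamard_estimator_identity}). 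Unitarity is the only structural input needed, and the only point requiring care is this cross-term step.

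For the specialization, I note that $\hat{A}^l$ is unitary for every $l$, because $\hat{A}$ is unitary by Eq.~(\ref{eq:cyclic_translation_unitarity}) and products of unitaries are unitary. The identity therefore applies with $\hat{U}=\hat{A}^l$ and gives $m_l$ as defined in Eq.~(\ref{eq:ml_definition}). There is no real obstacle in this argument. I will also remark that controlled-$\hat{A}^l$ can be realized as $l$ applications of the controlled adder.
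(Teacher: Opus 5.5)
Your proof is correct and follows the same route as the paper: you track the ancilla--system state through the Hadamard test and read off $\langle \hat{Z}_C\rangle$. Your version is also more complete, because the paper goes straight from the post-controlled-$\hat{U}$ state to $\langle\hat{Z}_C\rangle$, while you apply the second Hadamard explicitly and compute $p_0-p_1$.
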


\begin{proof}---After the first Hadamard gate and the controlled-$\hat{U}$ operation, the joint state is
\begin{eqnarray}
\ket{\Phi} = \frac{1}{\sqrt{2}} \left( \ket{0}_C\ket{\psi} + \ket{1}_C\hat{U}\ket{\psi} \right),
\label{eq:hadamard_state_after_controlled_U}
\end{eqnarray}
and 
\begin{eqnarray}
\langle\hat{Z}_C\rangle &=& \frac{1}{2} \left( \bra{\psi}\hat{U}\ket{\psi} + \bra{\psi}\hat{U}^{\dagger}\ket{\psi} \right) = \mathrm{Re}\bra{\psi}\hat{U}\ket{\psi}.
\label{eq:hadamard_proof}
\end{eqnarray}
This proves the claim.
\end{proof}

\begin{figure}[t]
\centering
\includegraphics[width=0.7\linewidth]{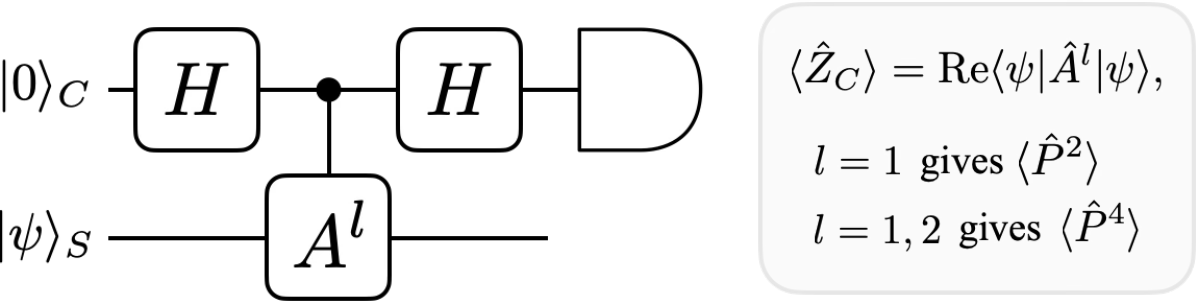}
\caption{Quantum circuit of the translation-moment estimator. A Hadamard-test-type circuit estimates the real part of a translation moment, $m_l=\mathrm{Re}\langle\hat{A}^{l}\rangle$. The moment $l=1$ is sufficient for the first kinetic term, while the second leading relativistic correction requires $l=1,2$.}
\label{fig:TM-estimator}
\end{figure}

The quantum circuit of the translation-moment estimator is drawn in Fig.~\ref{fig:TM-estimator}. Here, if the imaginary part of a translation moment is required, one may equivalently measure the ancilla in the $Y$ basis, or insert the appropriate phase rotation before the final Hadamard measurement. The present kinetic estimators require only the real parts because $\hat{P}^{2}$ and $\hat{P}^{4}$ are Hermitian and depend on $\hat{A}^{l}+\hat{A}^{\dagger l}$.

With $M_l$ shots allocated to the $l$th translation moment, the empirical estimator is
\begin{eqnarray}
\widetilde{m}_l = \frac{1}{M_l} \sum_{r=1}^{M_l} z_r^{(l)},
\quad
z_r^{(l)} \in \{-1,+1\},
\label{eq:empirical_translation_moment}
\end{eqnarray}
and Eqs.~(\ref{eq:P2_pbc_m1})--(\ref{eq:T_pbc_moments}) are evaluated by replacing $m_l$ with $\widetilde{m}_l$.

\subsection{DBC boundary-overlap estimator}\label{subsec:boundary_overlap_estimator}

The DBC correction terms in Eqs.~(\ref{eq:P2_dbc_expectation}) and~(\ref{eq:P4_dbc_expectation}) are boundary-local. They contain endpoint probabilities and coherences between a small number of endpoint or near-endpoint basis states. These coherences have the generic form
$\left\langle \bigl( \ketbra{f}{g} + \ketbra{g}{f} \bigr) \right\rangle,
\label{eq:generic_boundary_coherence}
$ with computational basis states $\ket{f}$ and $\ket{g}$. The following identity gives a direct probability estimator for such terms.

\begin{proposition}[Boundary-overlap identity]
\label{prop:boundary_overlap_identity}
Let $f \neq g$ and define
\begin{eqnarray}
P_f &=& \abs{\braket{f}{\psi}}^2, \nonumber \\
P_g &=& \abs{\braket{g}{\psi}}^2, \nonumber \\
P_{fg}^{+} &=& \abs{\langle s_{fg}^{+}|\psi\rangle}^2, 
\label{eq:P_f_g_fg}
\end{eqnarray}
where $| s_{fg}^{+} \rangle = (\ket{f}+\ket{g})/\sqrt{2}$. Then,
\begin{eqnarray}
\left\langle \bigl( \ketbra{f}{g} + \ketbra{g}{f} \bigr) \right\rangle = 2P_{fg}^{+} - P_f - P_g.
\label{eq:boundary_overlap_identity}
\end{eqnarray}
\end{proposition}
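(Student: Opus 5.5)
The plan is to verify the identity directly in the computational basis by expanding both sides in terms of the amplitudes of $\ket{\psi}$. Write $c_f=\braket{f}{\psi}$ and $c_g=\braket{g}{\psi}$, consistent with Eq.~(\ref{eq:lattice_state}), so that $P_f=\abs{c_f}^2$ and $P_g=\abs{c_g}^2$. The left-hand side of Eq.~(\ref{eq:boundary_overlap_identity}) is then
\begin{eqnarray}
\bra{\psi}\bigl(\ketbra{f}{g}+\ketbra{g}{f}\bigr)\ket{\psi} = c_f^{\ast}c_g + c_g^{\ast}c_f = 2\,\mathrm{Re}\left(c_f^{\ast}c_g\right),
\nonumber
\end{eqnarray}
which has the same form as Eq.~(\ref{eq:E0_expectation}) with $(f,g)=(0,N-1)$.

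Next I expand the superposition overlap. Because $f\neq g$, the states $\ket{f}$ and $\ket{g}$ are orthonormal, so $\ket{s_{fg}^{+}}$ is normalized and $\langle s_{fg}^{+}|\psi\rangle=(c_f+c_g)/\sqrt{2}$. This gives
\begin{eqnarray}
P_{fg}^{+}=\frac{1}{2}\abs{c_f+c_g}^2=\frac{1}{2}\left(\abs{c_f}^2+\abs{c_g}^2+c_f^{\ast}c_g+c_g^{\ast}c_f\right).
\nonumber
\end{eqnarray}
Multiplying by two and subtracting $P_f+P_g$ leaves exactly $c_f^{\ast}c_g+c_g^{\ast}c_f$, which matches the left-hand side. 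This completes the argument.

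The only point needing care is the hypothesis $f\neq g$. It guarantees the orthogonality of $\ket{f}$ and $\ket{g}$, and hence both the normalization of $\ket{s_{fg}^{+}}$ and the vanishing of the cross terms in the expansion. The same identity can be written at the operator level as $2\ketbra{s_{fg}^{+}}{s_{fg}^{+}}-\ketbra{f}{f}-\ketbra{g}{g}=\ketbra{f}{g}+\ketbra{g}{f}$. I would note this form because it shows that $P_{fg}^{+}$ is the probability of a genuine projective measurement outcome. Such a measurement can be implemented by a basis change that maps $\ket{s_{fg}^{+}}$ to a computational basis state. The boundary coherences in Eq.~(\ref{eq:E012_expectation}) are then obtained by substituting $(f,g)=(0,N-1)$, $(1,N-1)$ and $(0,N-2)$.
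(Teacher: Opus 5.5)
Your proposal is correct and follows essentially the same route as the paper: write $c_f=\braket{f}{\psi}$ and $c_g=\braket{g}{\psi}$, expand $P_{fg}^{+}=\tfrac{1}{2}\abs{c_f+c_g}^2$, and observe that $2P_{fg}^{+}-P_f-P_g$ leaves exactly $c_f^{\ast}c_g+c_g^{\ast}c_f$. Your operator-level restatement $2\ketbra{s_{fg}^{+}}{s_{fg}^{+}}-\ketbra{f}{f}-\ketbra{g}{g}=\ketbra{f}{g}+\ketbra{g}{f}$ is a nice addition. Note, however, that neither this expansion nor the amplitude algebra actually uses orthogonality, so no cross terms vanish there; the hypothesis $f\neq g$ is needed only so that $\ket{s_{fg}^{+}}$ is normalized and $P_{fg}^{+}$ is a genuine measurement probability.
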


\begin{proof}---Writing $c_f=\braket{f}{\psi}$ and $c_g=\braket{g}{\psi}$, the normalisation of $|s_{fg}^{+}\rangle$ gives
\begin{eqnarray}
P_{fg}^{+} = \abs{\frac{c_f+c_g}{\sqrt{2}}}^2 = \frac{\abs{c_f}^2 + \abs{c_g}^2 + c_f^{\ast} c_g + c_g^{\ast} c_f}{2}.
\label{eq:Pfg_plus_expansion}
\end{eqnarray}
Therefore,
\begin{eqnarray}
2P_{fg}^{+} - P_f - P_g &=& c_f^{\ast} c_g + c_g^{\ast} c_f = \left\langle \bigl( \ketbra{f}{g} + \ketbra{g}{f} \bigr) \right\rangle.
\label{eq:Pfg_plus_expansion_final}
\end{eqnarray}
The proof is completed.
\end{proof}

\begin{figure}[t]
\centering
\includegraphics[width=0.45\linewidth]{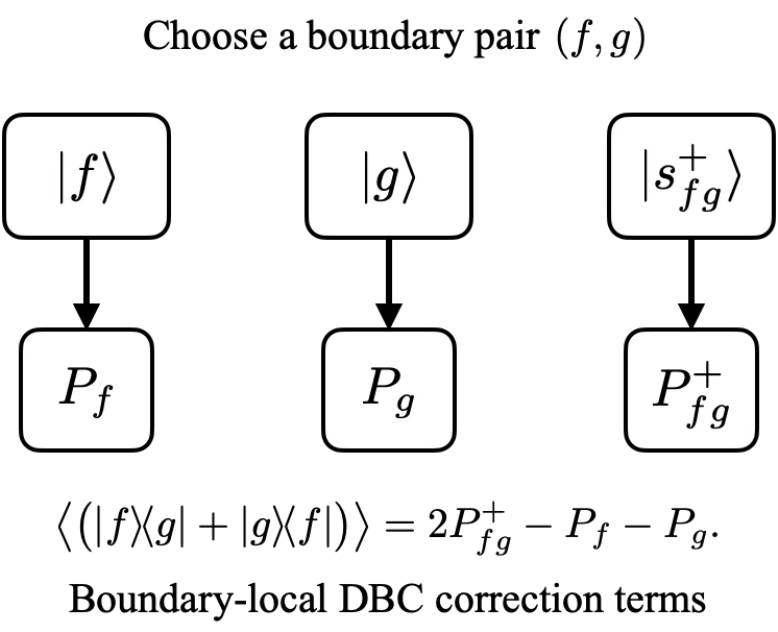}
\caption{Schematic for the boundary-overlap estimator. DBC introduces boundary-local coherences. For a boundary pair $(f, g)$, the probabilities associated with $\ket{f}$, $\ket{g}$, and the normalised superposition $|s_{fg}^{+}\rangle=(\ket{f}+\ket{g})/\sqrt{2}$ determine $\left\langle \bigl( \ketbra{f}{g} + \ketbra{g}{f} \bigr) \right\rangle = 2P_{fg}^{+}-P_f-P_g$. These terms reconstruct the DBC boundary corrections $\hat{E}_0,\hat{E}_1,\hat{E}_2$, and $\hat{E}_0^2$.}
\label{fig:BO-estimator}
\end{figure}

The factor of two in Eq.~(\ref{eq:boundary_overlap_identity}) is essential: it follows from the normalisation of $|s_{fg}^{+}\rangle=(\ket{f}+\ket{g})/\sqrt{2}$. The identity holds for general
complex amplitudes and returns the real coherence $2\mathrm{Re}(c_f^{\ast}\, c_g)$, which is precisely the quantity needed for the Hermitian boundary operators. Schematic of the boundary-overlap estimator is given in Fig.~\ref{fig:BO-estimator}. The corresponding phase-dependent identities are collected in \ref{app:boundary_overlap}.

For compactness, we define the pair-coherence estimator
\begin{eqnarray}
B_{f, g} \equiv 2P_{fg}^{+} - P_f - P_g.
\label{eq:Bfg_definition}
\end{eqnarray}
Then, using the explicit DBC boundary operators in Eqs.~(\ref{eq:E0_squared_explicit})--(\ref{eq:E12_explicit}), their expectation values are
\begin{eqnarray}
b_0 &\equiv& \langle\hat{E}_0\rangle = B_{0,N-1}, \nonumber \\
b_{00} &\equiv& \langle\hat{E}_0^2\rangle = P_0 + P_{N-1}, \nonumber \\
b_1 &\equiv& \langle\hat{E}_1\rangle = 2P_0 + B_{1,N-1}, \nonumber \\
b_2 &\equiv& \langle \hat{E}_2\rangle = 2P_{N-1} + B_{0,N-2}.
\label{eq:b_estimators}
\end{eqnarray}
Here, $P_j=\abs{\braket{j}{\psi}}^2$. We assume $N>2$ so that the endpoint and near-endpoint labels appearing in Eq.~(\ref{eq:b_estimators}) are distinct. The small grids $N=1,2$ are degenerate as finite-difference representations of a Dirichlet interval and are not used in the numerical benchmarks.

The DBC momentum moments are therefore reconstructed from the same translation moments $m_1, m_2$ and the four boundary quantities $b_0, b_1, b_2, b_{00}$:
\begin{eqnarray}
\langle \hat{P}_{\rm D}^{2}\rangle &=& \frac{\hbar^2}{\Delta_{\rm D}^{2}} \left( 2 - 2 m_1 + b_0 \right),
\label{eq:P2_dbc_estimator_moments}
\\
\langle \hat{P}_{\rm D}^{4}\rangle &=& \frac{\hbar^4}{\Delta_{\rm D}^{4}} \left( 2m_2 - 8m_1 + 6 + 4b_0 - b_1 - b_2 + b_{00} \right). \label{eq:P4_dbc_estimator_moments} 
\end{eqnarray}
The corresponding relativistic kinetic-energy estimator is
\begin{eqnarray}
\langle\hat{T}_{\rm D} \rangle = \frac{1}{2m} \langle\hat{P}_{\rm D}^{2}\rangle - \frac{1}{8 m^3 c^2} \langle\hat{P}_{\rm D}^{4}\rangle .
\label{eq:T_dbc_estimator_moments}
\end{eqnarray}

For reference, Table~\ref{tab:boundary_estimators} lists the boundary terms needed through the leading relativistic correction. The table is written in terms of measurable probabilities rather than amplitudes, and therefore directly specifies the estimator inputs.

\begin{table}[ht]
\centering
\begin{tabular}{c c c c}
Boundary quantity & Pair state(s) & Probability combination & Used-in \\
\hline
$b_0=\langle \hat{E}_0\rangle$		& $(0,N-1)$	& $B_{0,N-1}$ 			& $\hat{P}_{\rm D}^{2},\hat{P}_{\rm D}^{4}$ \\
$b_{00}=\langle \hat{E}_0^2\rangle$ & $(0,N-1)$		& $P_0+P_{N-1}$ 		& $\hat{P}_{\rm D}^{4}$ \\
$b_1=\langle \hat{E}_1\rangle$		& $(1,N-1)$	& $2P_0+B_{1,N-1}$	 	& $\hat{P}_{\rm D}^{4}$ \\
$b_2=\langle \hat{E}_2\rangle$		& $(0,N-2)$	& $2P_{N-1}+B_{0,N-2}$	& $\hat{P}_{\rm D}^{4}$
\end{tabular}
\caption{Boundary-local probability estimators required for DBC through $\hat{P}^{4}$. For a pair $(f,g)$, $|s_{fg}^{+}\rangle=(\ket{f}+\ket{g})/\sqrt{2}$ and $B_{f,g}=2P_{fg}^{+}-P_f-P_g$. The diagonal endpoint probabilities are obtained by computational-basis measurements.}
\label{tab:boundary_estimators}
\end{table}

\subsection{Potential-energy estimator and total-energy workflow}\label{subsec:potential_and_workflow}

The potential term is simpler than the kinetic term because the lattice potential is diagonal in the computational basis. To compute the potential energy $\langle \hat{V}_{\tau} \rangle$, we may use a quantum circuit introduced in \cite{Lubasch2020,JJ2021}. This scheme requires the initial preparation of a control qubit, wavefunction qubits and a diagonal density matrix $\hat{\rho}_V= \sum_{j = 0}^{2^L-1}   {\cal V}_j  \ket{j}\bra{j}$, because $\hat{V} = {\cal S}\, \hat{\rho}_V  =  {\cal S}\,\sum_{j=0}^{2^L-1} {\cal V}_j \ket{j}\bra{j}$ for the scale factor of the potential ${\cal S}$. Then, one performs a controlled block-SWAP gate between the wavefunction qubits and the density matrix. 

However, we here examine a direct and practical method of position-sampling estimation for the potential energy.
For the boundary conditions $\tau \in \{{\rm P},{\rm D}\}$, Eq.~(\ref{eq:lattice_potential_sec2}) gives
\begin{eqnarray}
\langle\hat{V}_{\tau}\rangle = \sum_{j=0}^{N-1} V\bigl( x_j^{(\tau)} \bigr) \abs{c_j}^2 .
\label{eq:potential_position_average}
\end{eqnarray}
Thus, $\langle\hat{V}_{\tau}\rangle$ is obtained by measuring the position register in the computational basis and applying a classical post-processing function $j \mapsto V\bigl(x_j^{(\tau)}\bigr)$.
\begin{proposition}[Position-sampling estimator for the potential]
\label{prop:potential_sampling}
Let $j_1,\ldots,j_M$ be independent computational-basis samples from $\ket{\psi}$, so that $\Pr[j_r=j] = \abs{c_j}^2$. Define
\begin{eqnarray}
\widetilde{V}_{\tau} = \frac{1}{M}\sum_{r=1}^{M} V\bigl( x_{j_r}^{(\tau)} \bigr).
\label{eq:potential_sampling_estimator}
\end{eqnarray}
Then, $\widetilde{V}_{\tau}$ is an unbiased estimator of $\langle\hat{V}_{\tau}\rangle$: i.e., 
\begin{eqnarray}
\mathbb{E}\bigl[ \widetilde{V}_{\tau} \bigr] = \langle\hat{V}_{\tau}\rangle .
\label{eq:potential_unbiased}
\end{eqnarray}
If the sampled potential values lie in $[V_{\min}, V_{\max}]$, then
\begin{eqnarray}
\mathrm{Var}\bigl[ \widetilde{V}_{\tau} \bigr] \le \frac{(V_{\max}-V_{\min})^2}{4M}.
\label{eq:potential_variance_bound}
\end{eqnarray}
\end{proposition}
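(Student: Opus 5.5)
The plan is to treat each sampled value $Y_r \equiv V\bigl(x_{j_r}^{(\tau)}\bigr)$ as an i.i.d.\ random variable and apply linearity of expectation together with the variance of an average of independent copies. For unbiasedness, I would compute the expectation of a single sample directly from the Born-rule distribution $\Pr[j_r=j]=\abs{c_j}^2$:
\begin{eqnarray}
\mathbb{E}[Y_r] = \sum_{j=0}^{N-1} V\bigl(x_j^{(\tau)}\bigr)\abs{c_j}^2 = \langle\hat{V}_{\tau}\rangle,
\nonumber
\end{eqnarray}
where the last equality is Eq.~(\ref{eq:potential_position_average}). Linearity of expectation then gives $\mathbb{E}[\widetilde{V}_{\tau}] = \frac{1}{M}\sum_r \mathbb{E}[Y_r] = \langle\hat{V}_{\tau}\rangle$, which is Eq.~(\ref{eq:potential_unbiased}).

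For the variance bound, independence of the samples gives
\begin{eqnarray}
\mathrm{Var}[\widetilde{V}_{\tau}] = \frac{1}{M^2}\sum_{r=1}^{M}\mathrm{Var}[Y_r] = \frac{\mathrm{Var}[Y_1]}{M}.
\nonumber
\end{eqnarray}
It then suffices to show that a random variable $Y$ supported in $[V_{\min},V_{\max}]$ satisfies $\mathrm{Var}[Y]\le (V_{\max}-V_{\min})^2/4$, which is Popoviciu's inequality. I would prove it in one line. Let $\mu_0=(V_{\min}+V_{\max})/2$. The variance minimizes mean-squared deviation over all constants, so
\begin{eqnarray}
\mathrm{Var}[Y]\le \mathbb{E}[(Y-\mu_0)^2].
\nonumber
\end{eqnarray}
Since $\abs{Y-\mu_0}\le (V_{\max}-V_{\min})/2$ pointwise, the right-hand side is at most $(V_{\max}-V_{\min})^2/4$. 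Combining the two steps yields Eq.~(\ref{eq:potential_variance_bound}).

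No step is a real obstacle. The only point needing care is making explicit that independent shots, each produced by a fresh preparation of $\ket{\psi}$, make the $Y_r$ i.i.d., so that the cross-covariances vanish. The bounded-range inequality is the one nontrivial ingredient, and the mean-squared-deviation argument above handles it without any further machinery.
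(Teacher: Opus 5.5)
Your proof is correct and follows the same route as the paper: unbiasedness comes from the Born-rule distribution together with linearity of expectation, and the variance bound comes from the $1/M$ reduction for i.i.d.\ samples combined with the bounded-range (Popoviciu) inequality. The only difference is that you prove that inequality in one line by comparing with the midpoint $\mu_0$, whereas the paper simply quotes it as a known fact.
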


\begin{proof}---By taking the expectation value of Eq.~(\ref{eq:potential_sampling_estimator}) over the measurement outcomes, we attain
\begin{eqnarray}
\mathbb{E}\bigl[ \widetilde{V}_{\tau} \bigr] = \sum_{j=0}^{N-1} V\bigl(x_j^{(\tau)}\bigr) \abs{c_j}^2 = \langle \hat{V}_{\tau}\rangle .
\label{eq:potential_unbiased_proof}
\end{eqnarray}
The variance bound follows from the fact that the variance of any random variable supported on an interval of width $V_{\max} - V_{\min}$ is at most $(V_{\max} - V_{\min})^2/4$, together with the $1/M$ variance reduction for the sample mean.
\end{proof}

The total energy through the leading relativistic correction is then
\begin{eqnarray}
\langle\hat{H}_{\tau}^{tot}\rangle = \langle \hat{T}_{\tau} \rangle + \langle\hat{V}_{\tau}\rangle,
\quad
\tau \in \{{\rm P}, {\rm D}\}.
\label{eq:total_energy_estimator_sec3}
\end{eqnarray}
The workflow is summarised in Algorithm~\ref{alg:energy_estimator}. The same procedure can be used either as a post-processing estimator for a fixed state or as the energy-evaluation subroutine inside a variational optimisation loop~\cite{Peruzzo2014,Cerezo2021,Bharti2022,Tilly2022}. In the latter case, the variational parameters determine the prepared state $\ket{\psi(\bm{\theta})}$, but the estimator identities themselves are unchanged.

\begin{algorithm}[ht]
\caption{BC-resolved estimator for $\langle\hat{H}_{\tau}^{tot}\rangle$.}
\label{alg:energy_estimator}
\begin{algorithmic}[1]
\Require Boundary condition (BC) $\tau \in \{{\rm P}, {\rm D}\}$, Grid size $N=2^L$, State-preparation circuit for $\ket{\psi}$, Potential values $V(x_j^{(\tau)})$.
\State Prepare the $L$-qubit first-quantised lattice state $\ket{\psi}=\sum_j c_j\ket{j}$.
\State Estimate $m_1=\mathrm{Re}\langle\hat{A}\rangle$ using the translation-moment circuit.
\State Estimate $m_2=\mathrm{Re}\langle\hat{A}^2\rangle$ using the translation-moment circuit.
\If{$\tau = \mathrm{P}$}
    \State Reconstruct $\langle\hat{P}_{\rm P}^{2}\rangle$ and $\langle \hat{P}_{\rm P}^{4}\rangle$ from Eqs.~(\ref{eq:P2_pbc_m1}) and (\ref{eq:P4_pbc_m1_m2}).
\ElseIf{$\tau = \mathrm{D}$}
    \State Estimate the boundary quantities $b_0,b_1,b_2,b_{00}$ using Table~\ref{tab:boundary_estimators}.
    \State Reconstruct $\langle\hat{P}_{\rm D}^{2}\rangle$ and $\langle \hat{P}_{\rm D}^{4}\rangle$ from Eqs.~(\ref{eq:P2_dbc_estimator_moments}) and (\ref{eq:P4_dbc_estimator_moments}).
\State Compute $\langle\hat{T}_{\tau} \rangle =\langle\hat{P}_{\tau}^{2}\rangle/(2m) -  \langle\hat{P}_{\tau}^{4}\rangle/(8 m^3 c^2)$.
\State Estimate $\langle\hat{V}_{\tau}\rangle = \sum_j V\bigl(x_j^{(\tau)}\bigr)\abs{c_j}^2$ by computational-basis sampling and classical post-processing.
\State Return
$\langle\hat{H}_{\tau}^{tot}\rangle = \langle\hat{T}_{\tau} \rangle + \langle\hat{V}_{\tau}\rangle$.
\end{algorithmic}
\end{algorithm}

The important structural point is that PBC and DBC fit into the same relativistic energy-estimation workflow. The translation moments encode the cyclic finite-difference contribution to the kinetic energy, while DBC adds only the boundary-local information needed to realize the open-chain Hamiltonian. Consequently, the leading relativistic correction does not require a qualitatively new measurement primitive beyond those already needed for first-quantized kinetic-energy estimation; it extends the measured observable set from $\langle\hat{A}\rangle$ to $\langle\hat{A}\rangle, \langle\hat{A}^{2}\rangle$ and, for DBC, a finite set of boundary-overlap terms.

\section{Validation of the PBC/DBC Relativistic Simulation Framework}\label{sec:validation}

We now validate our PBC and DBC relativistic simulation framework developed in Secs.~\ref{sec:lattice_hamiltonian} and~\ref{sec:quantum_estimators}. The validation has two roles. First, analytically solvable systems test whether the PBC and DBC lattice Hamiltonians reproduce the expected spectra and continuum trends. Second, the estimator reconstruction is compared with direct matrix evaluation on the same lattice to verify the measurement formulas. This separation is important: the finite-difference lattice introduces a discretization error, the weakly relativistic expansion introduces a truncation error, and finite measurements introduce a statistical error. The matrix construction, estimator reconstruction from state vectors, benchmark parameters, and error metrics used for these validations are collected in \ref{app:numerical_methods}.

Throughout the numerical examples, we use natural units, i.e., $\hbar=m=1$, unless stated otherwise. The grid spacing is chosen according to the boundary condition, namely $\Delta_{\rm P}=R/N$ for PBC and $\Delta_{\rm D}=R/(N+1)$ for DBC. The relativistic energy reported below is
the perturbative estimator
\begin{eqnarray}
\langle\hat{T}_{\tau}\rangle = \frac{1}{2m}\langle\hat{P}^{2}_{\tau}\rangle - \frac{1}{8 m^3 c^2}\langle\hat{P}^{4}_{\tau}\rangle,
\quad
\tau \in \{{\rm P}, {\rm D}\}.
\label{eq:validation_T4}
\end{eqnarray}

\subsection{PBC benchmark: free-particle dispersion} \label{subsec:pbc_free_particle_validation}

For PBC, the natural analytic benchmark is the free particle. Define the lattice Fourier state: for $n=0,1,\ldots,N-1$,
\begin{eqnarray}
\ket{q_n} = \frac{1}{\sqrt{N}} \sum_{j=0}^{N-1} \exp\left(\frac{2\pi i n j}{N}\right) \ket{j}.
\label{eq:pbc_fourier_mode}
\end{eqnarray}
This state diagonalises the cyclic translation operator and therefore also the PBC finite-difference momentum operator.

\begin{proposition}[PBC lattice dispersion]
\label{prop:pbc_lattice_dispersion}
For the PBC momentum operator in Eq.~(\ref{eq:P2_pbc_operator}), the Fourier state $\ket{q_n}$ satisfies
\begin{eqnarray}
\hat{P}^{2}_{\rm P}\ket{q_n} &=& p^{2}_{\rm lat}(n)\ket{q_n}, \nonumber \\[2pt]
p^{2}_{\rm lat}(n) &=& \frac{4\hbar^2}{\Delta_{\rm P}^2} \sin^2\!\left(\frac{\pi n}{N}\right).
\label{eq:pbc_p2_lattice_dispersion}
\end{eqnarray}
In the continuum limit at fixed $R$ and fixed mode number $n$,
\begin{eqnarray}
p^{2}_{\rm lat}(n) ~\longrightarrow~ p^{2}_{\rm cont}(n) = \left(\frac{2\pi n\hbar}{R}\right)^2.
\label{eq:pbc_p2_continuum_dispersion}
\end{eqnarray}
\end{proposition}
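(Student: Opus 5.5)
The plan is to show first that $\ket{q_n}$ is an eigenvector of the cyclic translation $\hat{A}$ of Eq.~(\ref{eq:cyclic_translation}). Then we read off the eigenvalue of $\hat{P}^{2}_{\rm P}$ from Eq.~(\ref{eq:P2_pbc_operator}), and finally take the continuum limit.

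\textbf{Step 1: $\ket{q_n}$ is an eigenvector of $\hat{A}$.} Apply $\hat{A}$ to Eq.~(\ref{eq:pbc_fourier_mode}), giving
\begin{eqnarray}
\hat{A}\ket{q_n}=\frac{1}{\sqrt{N}}\sum_{j=0}^{N-1} e^{2\pi i n j/N}\ket{j+1 \ {\rm mod}\ N}.
\end{eqnarray}
Relabel $k=j+1 \ {\rm mod}\ N$. The only step needing care is the wrap-around term $j=N-1\mapsto k=0$. There the phase $e^{2\pi i n (N-1)/N}$ must equal $e^{2\pi i n(k-1)/N}$ evaluated at $k=0$. This holds because $e^{2\pi i n N/N}=e^{2\pi i n}=1$ for integer $n$, and it is precisely the periodicity that makes the Fourier modes compatible with PBC. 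The result is $\hat{A}\ket{q_n}=e^{-2\pi i n/N}\ket{q_n}$. Since $\hat{A}$ is unitary (Eq.~(\ref{eq:cyclic_translation_unitarity})), it follows that $\hat{A}^{\dagger}\ket{q_n}=e^{2\pi i n/N}\ket{q_n}$.

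\textbf{Step 2: eigenvalue of $\hat{P}^{2}_{\rm P}$.} Substitute these eigenvalues into Eq.~(\ref{eq:P2_pbc_operator}):
\begin{eqnarray}
\hat{P}^{2}_{\rm P}\ket{q_n}=-\frac{\hbar^2}{\Delta_{\rm P}^2}\left(2\cos\frac{2\pi n}{N}-2\right)\ket{q_n}.
\end{eqnarray}
The half-angle identity $1-\cos 2\theta=2\sin^2\theta$ with $\theta=\pi n/N$ then gives Eq.~(\ref{eq:pbc_p2_lattice_dispersion}).

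\textbf{Step 3: continuum limit.} Insert $\Delta_{\rm P}=R/N$ from Eq.~(\ref{eq:pbc_grid}), which gives $p^2_{\rm lat}(n)=(4\hbar^2N^2/R^2)\sin^2(\pi n/N)$. At fixed $n$ and $R$, let $N\to\infty$. Then $\sin(\pi n/N)=\pi n/N+O(N^{-3})$, so
\begin{eqnarray}
p^2_{\rm lat}(n)=\left(\frac{2\pi n\hbar}{R}\right)^2\left[1+O\!\left(\frac{n^2}{N^2}\right)\right].
\end{eqnarray}
This yields Eq.~(\ref{eq:pbc_p2_continuum_dispersion}).

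It is worth stating explicitly that the limit is taken at fixed mode number. Modes with $n\sim N/2$ do not converge, which reflects the bounded lattice spectrum. The wrap-around check in Step 1 is the only nontrivial point; the rest is routine trigonometry.
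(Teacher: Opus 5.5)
Your proposal is correct and follows the paper's own route. You show $\hat{A}\ket{q_n}=e^{-2\pi i n/N}\ket{q_n}$, substitute it into Eq.~(\ref{eq:P2_pbc_operator}) with the half-angle identity, and take $N\to\infty$ at fixed $n$ using $\Delta_{\rm P}=R/N$ and $\sin(\pi n/N)=\pi n/N+O(N^{-3})$; your explicit wrap-around check and $O(n^2/N^2)$ relative error only add detail to the same argument.
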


\begin{proof}---Using $\hat{A}\ket{j}=\ket{j+1 \ {\rm mod} \ N}$, one obtains
\begin{eqnarray}
\hat{A}\ket{q_n} = \exp\left(-\frac{2\pi i n}{N}\right)\ket{q_n}.
\label{eq:pbc_A_eigenvalue_proof}
\end{eqnarray}
By substitution into $\hat{P}^{2}_{\rm P}=-(\hbar^2/\Delta_{\rm P}^{2})(\hat{A}+\hat{A}^{\dagger}-2\hat{\mathds{1}})$,  we have
\begin{eqnarray}
p^{2}_{\rm lat}(n) &=& -\frac{\hbar^2}{\Delta_{\rm P}^{2}} \left[ 2\cos\left(\frac{2\pi n}{N}\right)-2 \right] \nonumber \\
	&=& \frac{4\hbar^2}{\Delta_{\rm P}^{2}} \sin^2\left(\frac{\pi n}{N}\right).
\label{eq:pbc_dispersion_proof}
\end{eqnarray}
Since $\Delta_{\rm P}=R/N$ and $\sin(\pi n/N)=\pi n/N + O(N^{-3})$ for fixed $n$, the continuum limit follows.
\end{proof}

\begin{figure}[t]
\centering
\includegraphics[width=0.65\linewidth]{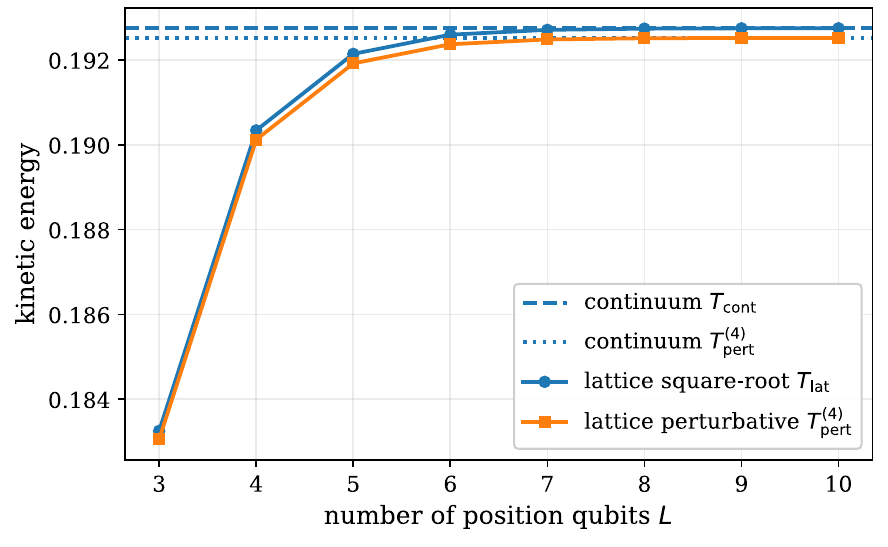}
\caption{PBC free-particle benchmark. The continuum relativistic energy $T_{\rm cont}$ is compared with the exact square-root lattice energy $T_{\rm lat}$ and the perturbative lattice energy $T_{\rm pert}^{(4)}$ for the Fourier mode $n=1$. Here, $R=10$ and $c=2$. The convergence of $T_{\rm lat}$ isolates the finite-grid discretization error, while the residual separation between $T_{\rm pert}^{(4)}$ and $T_{\rm cont}$ reflects the relativistic truncation error.}
\label{fig:pbc_free_particle_dispersion}
\end{figure}

For this benchmark, we compare three kinetic energies,
\begin{eqnarray}
T_{\rm cont} &=& mc^2\left[ \sqrt{1+\frac{p^{2}_{\rm cont}}{m^2c^2}} - 1 \right], \nonumber \\
T_{\rm lat} &=& mc^2\left[ \sqrt{1+\frac{p^{2}_{\rm lat}}{m^2c^2}} - 1 \right], \nonumber \\
T^{(4)}_{\rm pert}  &=& \frac{p^{2}_{\rm lat}}{2m} - \frac{p^{4}_{\rm lat}}{8m^3c^2}.
\label{eq:T_validation}
\end{eqnarray}
Fig.~\ref{fig:pbc_free_particle_dispersion} shows the result for $R=10$, $c=2$, and $n=1$, for which $p_{\rm cont}^2/(m^2c^2) \simeq 9.87 \times 10^{-2}$. The square-root lattice energy $T_{\rm lat}$ converges to the continuum value as $L$ increases. The perturbative value $T_{\rm pert}^{(4)}$ follows the same discretization trend but remains separated from $T_{\rm cont}$ by the controlled truncation error of the weakly relativistic expansion.

\subsection{DBC benchmark: infinite square well}\label{subsec:dbc_square_well_validation}

For DBC, the corresponding analytic benchmark is the infinite square well on the interval $[0,R]$. With the interior-grid convention of Eq.~(\ref{eq:dbc_grid}), the normalised discrete sine modes are, for $s=1,\ldots,N$, 
\begin{eqnarray}
\psi_s(j) = \sqrt{\frac{2}{N+1}} \sin\left[ \frac{\pi s(j+1)}{N+1} \right].
\label{eq:dbc_sine_modes}
\end{eqnarray}
They diagonalise the open-chain finite-difference Laplacian.

\begin{proposition}[DBC open-chain spectrum]
\label{prop:dbc_open_chain_spectrum}
For the DBC momentum operator in Eq.~(\ref{eq:P2_dbc_operator}), the sine mode in Eq.~(\ref{eq:dbc_sine_modes}) satisfies
\begin{eqnarray}
\hat{P}^{2}_{\rm D}\ket{\psi_s} &=& p^{2}_{{\rm D},\rm lat}(s)\ket{\psi_s}, \nonumber \\[2pt]
p^{2}_{{\rm D}, \rm lat}(s) &=& \frac{4\hbar^2}{\Delta_{\rm D}^{2}} \sin^2\left[\frac{\pi s}{2(N+1)}\right].
\label{eq:dbc_p2_lattice_spectrum}
\end{eqnarray}
Consequently,
\begin{eqnarray}
\bra{\psi_s}\hat{P}^{4}_{\rm D}\ket{\psi_s} = \bigl[ p^{2}_{{\rm D}, \rm lat}(s) \bigr]^2.
\label{eq:dbc_p4_lattice_spectrum}
\end{eqnarray}
\end{proposition}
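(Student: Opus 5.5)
The plan is to work directly with the open-chain action of $\hat{P}^{2}_{\rm D}$ in components. The first step is to observe from Eq.~(\ref{eq:P2_dbc_operator}) that $\hat{A}+\hat{A}^{\dagger}-\hat{E}_0$ is the open-chain adjacency operator. On a vector $\sum_j u_j\ket{j}$ it returns the components $u_{j-1}+u_{j+1}$, under the convention $u_{-1}=u_N=0$. The subtraction of $\hat{E}_0$ removes exactly the wrap-around contributions $u_{N-1}$ at site $0$ and $u_0$ at site $N-1$ that the cyclic $\hat{A}+\hat{A}^{\dagger}$ would add. Hence
\begin{eqnarray}
\bigl(\hat{P}^{2}_{\rm D}\psi\bigr)_j = -\frac{\hbar^2}{\Delta_{\rm D}^2}\left(u_{j+1}+u_{j-1}-2u_j\right),\quad j=0,\ldots,N-1,
\end{eqnarray}
with the virtual boundary values of Eq.~(\ref{eq:dbc_virtual_boundaries}) imposed as $u_{-1}=u_N=0$.

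The second step is to check that the sine mode of Eq.~(\ref{eq:dbc_sine_modes}) satisfies these virtual conditions automatically. Extending the formula to $j=-1$ gives $\sin 0=0$, and to $j=N$ gives $\sin(\pi s)=0$. The open-chain recursion is therefore the same as the bulk three-term recursion evaluated on the extended sine sequence, at every site including the endpoints.

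The third step is the trigonometric identity. Writing $\theta_s=\pi s/(N+1)$, we have $\sin[\theta_s(j+2)]+\sin[\theta_s j]=2\cos\theta_s\,\sin[\theta_s(j+1)]$. The eigenvalue is then $-(\hbar^2/\Delta_{\rm D}^2)(2\cos\theta_s-2)=(4\hbar^2/\Delta_{\rm D}^2)\sin^2(\theta_s/2)$, which reproduces Eq.~(\ref{eq:dbc_p2_lattice_spectrum}), in parallel with the proof of Proposition~\ref{prop:pbc_lattice_dispersion}.

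The $\hat{P}^{4}_{\rm D}$ statement then follows from Eq.~(\ref{eq:P4_dbc_definition}). Applying $\hat{P}^{2}_{\rm D}$ twice to an eigenvector gives $\hat{P}^{4}_{\rm D}\ket{\psi_s}=[p^2_{{\rm D},\rm lat}(s)]^2\ket{\psi_s}$. Taking the inner product with $\ket{\psi_s}$ requires normalization, $\sum_{j=0}^{N-1}\tfrac{2}{N+1}\sin^2[\theta_s(j+1)]=1$. This follows from the standard sum $\sum_{k=1}^{N}\sin^2(\pi s k/(N+1))=(N+1)/2$ for $1\le s\le N$, which one proves by writing $\sin^2=(1-\cos)/2$ and using that the full cosine sum $\sum_{k=0}^{N}\cos(2\pi s k/(N+1))$ vanishes.

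The main point needing care is the endpoint rows, where one must confirm that subtracting $\hat{E}_0$ indeed removes the cyclic coupling rather than, for instance, double-counting it. It also matters that $N>2$, so that $\hat{E}_0$ only touches sites $0$ and $N-1$. This is handled by the explicit component check in the first step. As an alternative cross-check, Eq.~(\ref{eq:P4_dbc_expectation}) together with the boundary values of Eq.~(\ref{eq:E012_expectation}) could be evaluated on $\psi_s$, but the eigenvector argument is cleaner.
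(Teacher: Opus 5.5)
Your proposal is correct and takes essentially the same route as the paper's proof. You write the open-chain action $(\hat{P}^{2}_{\rm D}\psi)_j=(\hbar^2/\Delta_{\rm D}^2)(2\psi_j-\psi_{j-1}-\psi_{j+1})$ with the virtual zeros $\psi_{-1}=\psi_N=0$, apply the three-term sine identity to get the eigenvalue $(2\hbar^2/\Delta_{\rm D}^2)(1-\cos\theta_s)$, and read off the fourth moment from $\hat{P}^{4}_{\rm D}=(\hat{P}^{2}_{\rm D})^2$.

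Your explicit endpoint-row check, the check that the sine extension vanishes at $j=-1$ and $j=N$, and the normalization sum are all correct and fill in details the paper leaves implicit. One small inaccuracy: the $N>2$ caveat is not needed here. $\hat{E}_0$ touches only sites $0$ and $N-1$ by definition, and $\hat{A}+\hat{A}^{\dagger}-\hat{E}_0$ equals the open-chain adjacency operator for every $N\geq 2$. That restriction matters only for the boundary-overlap estimators, where the pair labels must be distinct.
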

\begin{figure*}[t]
\centering
\includegraphics[width=0.75\linewidth]{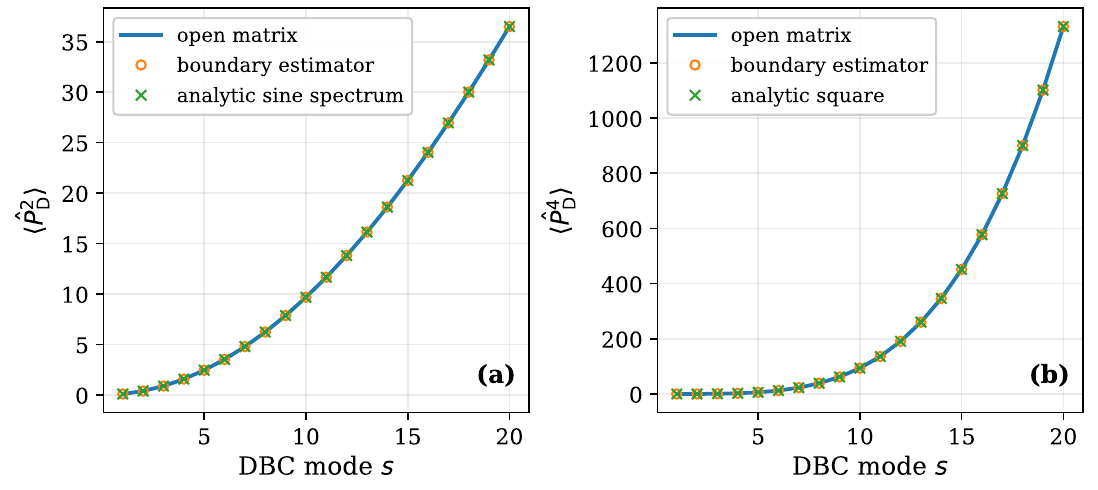}
\caption{DBC infinite-square-well benchmark. Here, we choose $N=64$ interior grid points and $R=10$. (a) We compare $\langle\hat{P}^{2}_{\rm D}\rangle$ obtained from the open-chain matrix evaluation, the boundary-corrected cyclic estimator, and the analytic sine spectrum. (b) We make the same comparison for $\langle\hat{P}^{4}_{\rm D}\rangle= \langle (\hat{P}^{2}_{\rm D})^2 \rangle$. The overlapping curves verify that the boundary-correction terms reconstruct the DBC finite-difference operator at the estimator level. }
\label{fig:dbc_square_well_validation}
\end{figure*}

\begin{proof}---The DBC finite-difference operator acts on the interior amplitudes as
\begin{eqnarray}
\bigl(\hat{P}^{2}_{\rm D}\psi\bigr)_j = \frac{\hbar^2}{\Delta_{\rm D}^{2}} \left(2\psi_j-\psi_{j-1}-\psi_{j+1}\right),
\label{eq:dbc_open_chain_action}
\end{eqnarray}
with the virtual values $\psi_{-1}=\psi_{N}=0$. By substituting the sine form in Eq.~(\ref{eq:dbc_sine_modes}) and using the elementary identity $2\sin\theta - \sin(\theta-\alpha) - \sin(\theta+\alpha) = 2(1-\cos\alpha)\sin\theta$, we attain the eigenvalue 
\begin{eqnarray}
\frac{2\hbar^2}{\Delta_{\rm D}^{2}}\left[ 1 - \cos\left(\frac{\pi s}{N+1}\right)\right],
\end{eqnarray}
which is equivalent to
Eq.~(\ref{eq:dbc_p2_lattice_spectrum}). Since
$\hat{P}^{4}_{\rm D}=(\hat{P}^{2}_{\rm D})^2$, Eq.~(\ref{eq:dbc_p4_lattice_spectrum}) follows.
\end{proof}

The estimator-level validation is stronger than the spectral check alone. For each sine mode, we compute $\langle\hat{P}^{2}_{\rm D}\rangle$ and $\langle\hat{P}^{4}_{\rm D}\rangle$ in three independent ways: direct open-chain matrix evaluation, the boundary-corrected cyclic estimator in Eqs.~(\ref{eq:P2_dbc_estimator_moments}) and (\ref{eq:P4_dbc_estimator_moments}), and the analytic eigenvalues above. 

Fig.~\ref{fig:dbc_square_well_validation} shows the agreement for $N=64$. The maximum relative discrepancy between the boundary-corrected estimator and the open-chain matrix evaluation is below $3.1 \times 10^{-14}$ for $\langle\hat{P}^{2}_{\rm D}\rangle$ and below $7.4 \times 10^{-11}$ for $\langle\hat{P}^{4}_{\rm D}\rangle$ over the displayed modes. This confirms that the DBC correction is not an approximation to the open-chain operator; it is an estimator decomposition of the same lattice operator.

\subsection{Smooth-potential benchmark}\label{subsec:smooth_potential_validation}

The previous two tests validate the kinetic operators in exactly solvable settings. We next include smooth potentials to test the full energy workflow on nontrivial lattice ground states. For PBC, we use the periodic potential
\begin{eqnarray}
V_{\rm P}(x) = V_0\left[ 1 - \cos\left(\frac{2\pi x}{R}\right) \right],
\label{eq:pbc_smooth_potential_validation}
\end{eqnarray}
while for DBC, we use a smooth confining potential in the box,
\begin{eqnarray}
V_{\rm D}(x) = \frac{1}{2}m\omega^2\left(x-\frac{R}{2}\right)^2.
\label{eq:dbc_smooth_potential_validation}
\end{eqnarray}
For each boundary condition and each grid size, we diagonalise the first-order total lattice Hamiltonian
\begin{eqnarray}
\hat{H}_{{\rm 1st},\tau} = \frac{\hat{P}^{2}_{\tau}}{2m} + \hat{V}_{\tau}
\label{eq:Hnr_validation}
\end{eqnarray}
and denote its ground state by $\ket{\psi_{0,\tau}}$. On this fixed state, we then evaluate
\begin{eqnarray}
E_{{\rm 1st},\tau} &=& \bra{\psi_{0,\tau}} \hat{H}_{{\rm 1st},\tau} \ket{\psi_{0,\tau}}, \nonumber \\
\Delta E_{{\rm rel},\tau} &=& - \frac{1}{8m^3c^2} \bra{\psi_{0,\tau}} \hat{P}^{4}_{\tau} \ket{\psi_{0,\tau}}, \nonumber \\
E_{{\rm rel},\tau}^{(4)} &=& E_{{\rm 1st},\tau} + \Delta E_{{\rm rel},\tau}.
\label{eq:E_validations}
\end{eqnarray}
Each quantity is evaluated both by direct matrix multiplication and by the estimator reconstruction of Sec.~\ref{sec:quantum_estimators}.

\begin{figure*}[t]
\centering
\includegraphics[width=0.7\linewidth]{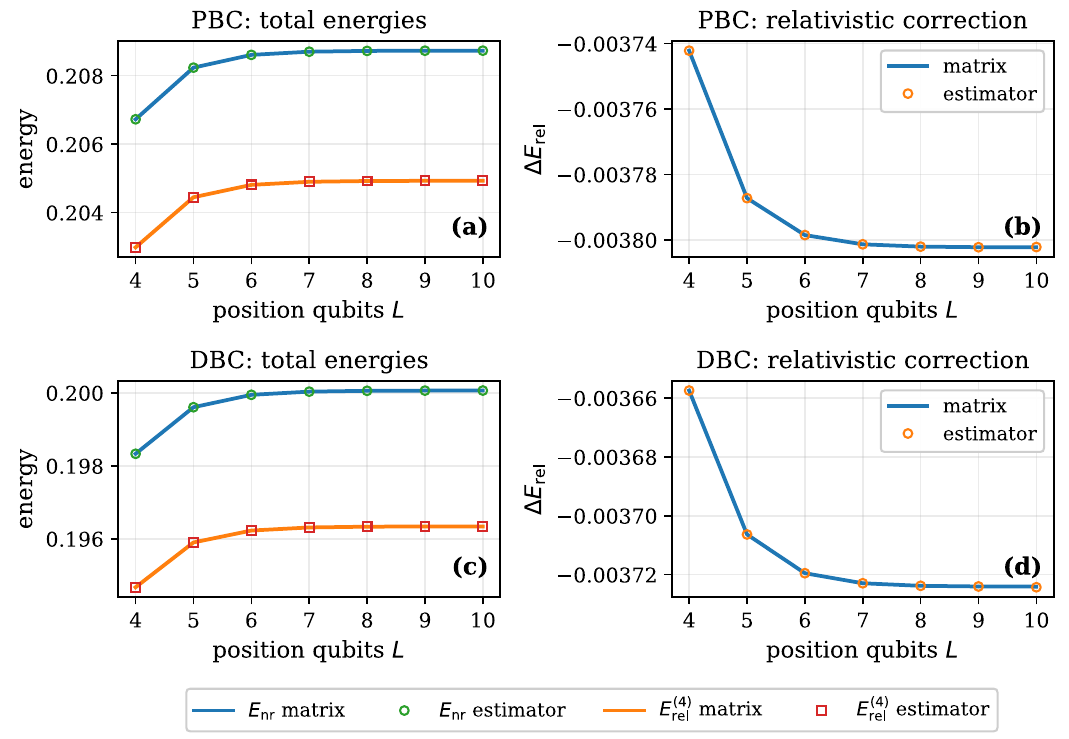}
\caption{Smooth-potential benchmark. The graphs (a) and (b) show the PBC results for the periodic potential in Eq.~(\ref{eq:pbc_smooth_potential_validation}). The graphs (c) and (d) show the DBC results for the confining potential in Eq.~(\ref{eq:dbc_smooth_potential_validation}). The first-order total energy $E_{\rm 1st}$, the perturbatively corrected energy $E_{\rm rel}^{(4)}$, and the relativistic correction $\Delta E_{\rm rel}$ are plotted as functions of the number of position qubits $L$. The solid lines denote direct matrix evaluation, while the open markers denote the translation-plus-boundary estimator reconstruction. }
\label{fig:smooth_potential_benchmark}
\end{figure*}

Fig.~\ref{fig:smooth_potential_benchmark} shows the results for $R=10$, $c=2$, $V_0=0.5$, and $\omega=0.4$. The grid sizes range from $L=4$ to $L=10$. The lines show direct matrix evaluation and the markers show the estimator reconstruction. The two are visually indistinguishable on the scale of the plot; the largest absolute difference in $E_{\rm rel}^{(4)}$ over all displayed PBC and DBC data points is $3.9 \times 10^{-7}$. The maximum value of the diagnostic ratio $\langle\hat{P}^{2}_{\tau}\rangle/(m^2 c^2)$ is approximately $5.1 \times 10^{-2}$, so the benchmark remains in the weakly relativistic regime for which Eq.~(\ref{eq:validation_T4}) is intended.

\subsection{Estimator-level finite-shot validation}\label{subsec:finite_shot_validation}

Finally, we test the statistical scaling of the measurement primitives. The translation-moment measurement produces a binary outcome $z^{(l)} \in \{-1,+1\}$ with mean $m_l={\rm Re}\langle\hat{A}^{l}\rangle$, whereas each boundary-overlap component is estimated from Bernoulli projection measurements. Therefore, for a fixed state, the measurement contribution to the energy error is expected to scale as
\begin{eqnarray}
\epsilon_{\rm meas}(M) = O\bigl(M^{-1/2}\bigr),
\label{eq:shot_noise_scaling_expected}
\end{eqnarray}
where $M$ is the number of shots allocated to each measurement primitive.

We quantify this scaling using the root-mean-square error
\begin{eqnarray}
{\rm RMSE}(M) = \sqrt{ \mathbb{E}\left[ \left( \widetilde{T}_{\tau}(M) - T_{\tau} \right)^2 \right]}
\label{eq:rmse_definition_validation}
\end{eqnarray}
for the kinetic estimator. The expectation is estimated by Monte Carlo sampling of the corresponding binary or Bernoulli measurement outcomes. The states are the smooth-potential ground states at $L=5$, using the same physical parameters as in Sec.~\ref{subsec:smooth_potential_validation}. For each value of $M$, we use $700$ independent Monte Carlo repetitions.

\begin{figure}[t]
\centering
\includegraphics[width=0.7\linewidth]{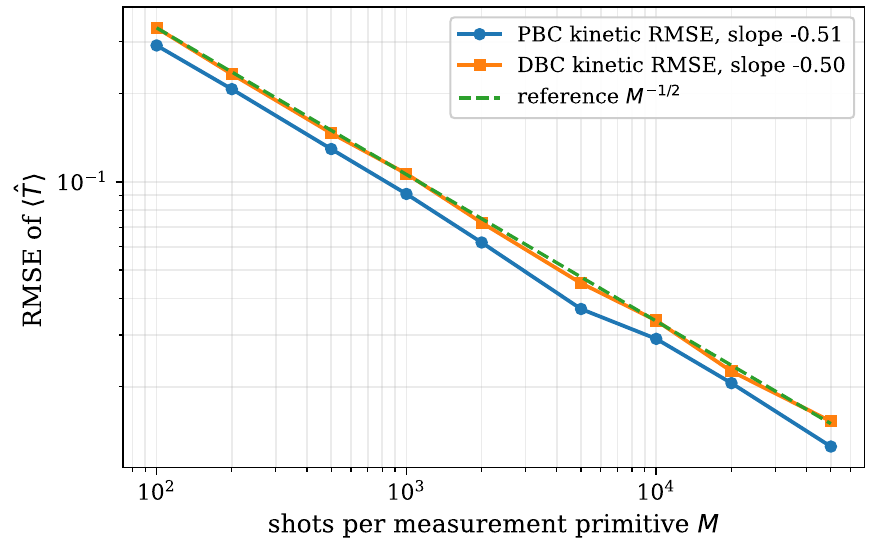}
\caption{Finite-shot scaling of the kinetic-energy estimators. The plotted RMSE is computed from repeated Monte Carlo sampling of the measurement primitives. The PBC estimator samples the translation moments $\langle\hat{A}\rangle$ and $\langle\hat{A}^{2}\rangle$ and the DBC estimator additionally is used for the boundary probabilities entering $\hat{E}_0$, $\hat{E}_1$, $\hat{E}_2$, and $\hat{E}_0^2$. The dashed line indicates the reference $M^{-1/2}$ scaling.
}
\label{fig:finite_shot_validation}
\end{figure}

Fig.~\ref{fig:finite_shot_validation} shows the resulting RMSE curves. The fitted slopes are $-0.51$ for the PBC kinetic estimator and $-0.50$ for the DBC kinetic estimator, in agreement with the expected $M^{-1/2}$ behaviour. The DBC curve has a slightly larger prefactor because it includes additional boundary-overlap probabilities, but it follows the same asymptotic statistical scaling as the translation-moment estimator.

Table~\ref{tab:validation_observable_summary} summarises the observable set used in the validation. The table emphasises that the leading relativistic correction enlarges the measured translation moments from $l=1$ to $l=1,2$ while DBC adds only boundary-local probability estimators.

\begin{table}[ht]
\centering
\begin{tabular}{c c c}
Quantity 									& Required measurement 		& Used-for \\
\hline
${\rm Re}\langle\hat{A}\rangle$ 				& Controlled translation 		& $\hat{P}^{2}$, $\hat{P}^{4}$ \\
${\rm Re}\langle\hat{A}^{2}\rangle$ 				& Controlled double translation 	& $\hat{P}^{4}$ \\
$\langle\hat{E}_0\rangle$ 						& Boundary overlap 			& DBC $\hat{P}^{2}$, DBC $\hat{P}^{4}$ \\
$\langle\hat{E}_1\rangle$, $\langle\hat{E}_2\rangle$ & Boundary overlap 			& DBC $\hat{P}^{4}$ \\
$\langle\hat{E}_0^2\rangle$ 					& Endpoint probabilities 		& DBC $\hat{P}^{4}$
\end{tabular}
\caption{Observable set required to reconstruct the kinetic moments through $\hat{P}^{4}$. The DBC observables are additional boundary-local terms; they are not needed under PBC.}
\label{tab:validation_observable_summary}
\end{table}

\section{Discussion}\label{sec:discussion}

We have presented a first-quantized methodology for relativistic quantum simulation under periodic and Dirichlet boundary conditions, denoted here by PBC and DBC. The main object of the framework was the finite-domain relativistic Hamiltonian. Starting from the positive-energy kinetic operator, the weakly relativistic simulator required the boundary-consistent moments $\langle\hat{P}^{2}\rangle$ and $\langle\hat{P}^{4}\rangle$. PBC and DBC were handled within the same position-register encoding, but they corresponded to different lattice momentum operators and therefore to different estimator formulas.

For PBC, the finite-domain structure was naturally represented by the unitary cyclic translation on the qubit register. The first kinetic energy term was reconstructed from $\mathrm{Re}\langle\hat{A}\rangle$, and the leading relativistic correction additionally required $\mathrm{Re}\langle\hat{A}^{2}\rangle$. For DBC, the target Hamiltonian was the open-chain finite-difference Hamiltonian on the interior grid. We implemented this target by decomposing it into a cyclic translation contribution plus boundary-local terms that cancel the unphysical wrap-around coupling. The additional DBC information was therefore limited to endpoint probabilities and endpoint or near-endpoint coherences associated with the expectation values of $\hat{E}_0$, $\hat{E}_1$, $\hat{E}_2$, and $\hat{E}_0^2$, rather than an extensive set of measurements over the whole grid.

The validation results support the interpretation. The PBC free-particle benchmark reproduced the lattice dispersion relation and its convergence toward the continuum relativistic energy. The DBC infinite-square-well benchmark showed that the boundary-local reconstruction gives the same $\hat{P}_{\rm D}^{2}$ and $\hat{P}_{\rm D}^{4}$ moments as direct open-chain matrix evaluation and as the analytic sine spectrum. The smooth-potential benchmarks confirmed that the full energy workflow, including diagonal potential sampling, agrees with direct matrix evaluation for nontrivial ground states under both boundary conditions. Finally, the finite-shot tests showed the expected $M^{-1/2}$ scaling for the kinetic-energy estimator, with the DBC case carrying a larger but controlled prefactor because of the additional boundary-overlap measurements.

Our formulation is intended for weakly relativistic regimes in which the positive-energy square-root kinetic operator is accurately represented by the low-momentum expansion through $\hat{P}^{4}$. In practical applications, the prepared lattice state should therefore have momentum support small compared with $mc$, and cutoff-scale grid modes should not dominate the energy estimate. This is a physical limitation of the perturbative relativistic approximation, not of the estimator identities themselves. The identities remain exact for the lattice operators defined in Sec.~\ref{sec:lattice_hamiltonian}; the expansion determines when the reconstructed quantity is a quantitatively accurate relativistic energy. Higher-order relativistic corrections and higher-order finite-difference stencils are summarized in \ref{app:higher_order}.

We should also highlight the importance of treating the boundary conditions as an essential part of quantum simulation. This perspective is particularly relevant for finite-domain systems, where the imposed boundary can directly shape the physical observables being simulated. Our framework showed that such the boundary awareness can be incorporated within a first-quantized quantum-simulation setting in a systematic and practical way. Beyond the specific examples considered here, the same viewpoint may be useful for more complex potentials, higher-dimensional geometries, and refined relativistic approximations. We therefore expect this boundary-aware formulations to play an important role in developing reliable relativistic quantum simulations of continuum systems.

\section*{Acknowledgments}

This work was supported by the Ministry of Science, ICT and Future Planning (MSIP) by the National Research Foundation of Korea (RS-2024-00432214, RS-2025-03532992, RS-2023-00281456, and RS-2023-NR119931) and the Institute of Information and Communications Technology Planning and Evaluation grant funded by the Korean government (RS-2019-II190003, “Research and Development of Core Technologies for Programming, Running, Implementing and Validating of Fault-Tolerant Quantum Computing System”). This work is also supported by the Grant No.~K25L5M2C2 at the Korea Institute of Science and Technology Information (KISTI).

\section*{Data availability statement} 

The data that support the findings of this study are openly available at the following URL/DOI: 


\appendix

\section{Grid conventions and dimensional analysis}\label{app:grid_conventions}

This appendix fixes the relation between the physical coordinate, the dimensionless coordinate, and the lattice parameters used in the main text. We include these details because the finite-difference momentum operators contain explicit powers of the grid spacing, and a dimensionally consistent convention is essential when relativistic parameters are introduced.

Let $x_{\rm phys} \in [0,R]$ denote the physical coordinate and let $y = {x_{\rm phys}}/{R}$ be the corresponding dimensionless coordinate. Then, we have
\begin{eqnarray}
&& \frac{d}{d x_{\rm phys}} = \frac{1}{R}\frac{d}{dy}, \nonumber \\
&& \hat{p} = -i\hbar\frac{d}{d x_{\rm phys}} = -i\frac{\hbar}{R}\frac{d}{dy} .
\label{eq:app_derivative_rescaling}
\end{eqnarray}
Thus, a finite-difference expression written in the dimensionless variable must carry the physical factor $R^{-1}$ in the momentum and $R^{-2}$ in the momentum squared.

For PBC, we use $N=2^L$ grid points on the circle,
\begin{eqnarray}
y_j^{({\rm P})} &=& \frac{j}{N}, \nonumber \\
x_j^{({\rm P})} &=& R y_j^{({\rm P})} = \frac{j R}{N}, \nonumber \\
\Delta_{\rm P} &=& \frac{R}{N}.
\label{eq:app_pbc_grid_physical}
\end{eqnarray}
For DBC, we use $N$ interior grid points in the interval,
\begin{eqnarray}
y_j^{({\rm D})} &=& \frac{j+1}{N+1}, \nonumber \\
x_j^{({\rm D})} &=& R y_j^{({\rm D})} = \frac{(j+1)R}{N+1}, \nonumber \\
\Delta_{\rm D} &=& \frac{R}{N+1}.
\label{eq:app_dbc_grid_physical}
\end{eqnarray}
The virtual boundary values are located at $y=0$ and $y=1$, or equivalently, at $x_{\rm phys}=0$ and $x_{\rm phys}=R$, and satisfy
\begin{eqnarray}
\psi(0)=0,
\quad
\psi(R)=0.
\label{eq:app_dbc_boundary_values}
\end{eqnarray}
The computational states $\ket{0}$ and $\ket{N-1}$ in the DBC register
therefore correspond to the first and last interior points, not to the
physical boundary points.

The physical second-derivative finite difference is
\begin{eqnarray}
\hbar^2\frac{d^2}{d x_{\rm phys}^2} ~\longrightarrow~ \frac{\hbar^2}{\Delta_\tau^2},
\quad
\tau \in \{{\rm P},{\rm D}\}.
\label{eq:app_physical_finite_difference}
\end{eqnarray}
Equivalently, if a dimensionless grid spacing $\delta_\tau = \frac{\Delta_\tau}{R}$ is used, then
\begin{eqnarray}
\frac{\hbar^2}{\Delta_\tau^2} = \frac{\hbar^2}{R^2\delta_\tau^2}.
\label{eq:app_delta_conversion}
\end{eqnarray}
This is the point at which dimensional consistency is most easily lost: a grid spacing in the dimensionless coordinate must not be used as if it were a physical length.

The dimensionless parameter controlling the size of the lattice momentum scale relative to $mc$ is
\begin{eqnarray}
\mu_\tau = \frac{\lambda_m}{\Delta_\tau} = \frac{\hbar}{mc\Delta_\tau},
\quad
\tau \in \{{\rm P},{\rm D}\},
\label{eq:app_mu_tau_definition}
\end{eqnarray}
where $\lambda_m$ is the reduced Compton wavelength: $\lambda_m =  \hbar /(mc)$.

Then, we have
\begin{eqnarray}
\frac{\hbar^2}{\Delta_\tau^2} &=& m^2 c^2 \mu_\tau^2.
\label{eq:app_mu_tau_powers}
\end{eqnarray}
If one instead works with a dimensionless coordinate $y$ and defines a notation, such as, $L_m=\lambda_m/\delta_\tau$, then $\lambda_m$ in that expression must itself be dimensionless, namely,
\begin{eqnarray}
\lambda_m^{({\rm dimless})} = \frac{\lambda_m}{R}.
\label{eq:app_dimensionless_lambda}
\end{eqnarray}
With this convention,
\begin{eqnarray}
L_{m,\tau} = \frac{\lambda_m^{({\rm dimless})}}{\delta_\tau} = \frac{\lambda_m/R}{\Delta_\tau/R} = \frac{\lambda_m}{\Delta_\tau} = \mu_\tau.
\label{eq:app_Lm_conversion}
\end{eqnarray}
Thus, the dimensionless lattice parameter that appears in the momentum operator is the ratio of the physical Compton wavelength to the physical grid spacing.

The positive-energy relativistic kinetic operator can be written in terms of
\begin{eqnarray}
\hat{Z}_\tau = \frac{\hat{P}_\tau^2}{m^2 c^2}
\label{eq:app_Z_tau_definition}
\end{eqnarray}
as
\begin{eqnarray}
\hat{T}_{\rm rel, \tau} = mc^2\left[ (\hat{\mathds{1}} + \hat{Z}_\tau)^{1/2} - \hat{\mathds{1}} \right].
\label{eq:app_exact_lattice_relativistic_operator}
\end{eqnarray}
The expansion used in the main text is
\begin{eqnarray}
\sqrt{1+z} - 1 = \frac{z}{2} - \frac{z^2}{8} + \frac{z^3}{16} + O(z^4),
\label{eq:app_scalar_rel_expansion}
\end{eqnarray}
and hence,
\begin{eqnarray}
\hat{T}_\tau = \frac{\hat{P}_\tau^2}{2m} - \frac{\hat{P}_\tau^4}{8 m^3 c^2}.
\label{eq:app_T4_tau_again}
\end{eqnarray}
For $z \geq 0$, Taylor's theorem gives
\begin{eqnarray}
0 \leq (1+z)^{1/2} - 1 - \frac{z}{2} + \frac{z^2}{8} \leq \frac{z^3}{16}.
\label{eq:app_remainder_scalar_bound}
\end{eqnarray}
Therefore, by functional calculus for the positive semidefinite operator $\hat{Z}_\tau$,
\begin{eqnarray}
0 \leq \hat{T}_{\rm rel,\tau}-\hat{T}_\tau \leq \frac{1}{16 m^5 c^4}\hat{P}_\tau^6,
\label{eq:app_operator_remainder_bound}
\end{eqnarray}
where $\hat{P}_\tau^6 \equiv (\hat{P}_\tau^2)^3$. Consequently, for any normalised lattice state $\ket{\psi}$,
\begin{eqnarray}
0 \leq \bra{\psi}\hat{T}_{\rm rel,\tau}\ket{\psi} - \bra{\psi}\hat{T}_\tau\ket{\psi} \leq \frac{\bra{\psi}\hat{P}_\tau^6\ket{\psi}}{16 m^5 c^4}.
\label{eq:app_state_remainder_bound}
\end{eqnarray}
This bound is not used as a numerical error estimate in the main text, but it makes explicit the low-momentum character of the perturbative approximation.

\section{Derivation of the PBC and DBC momentum moments}\label{app:boundary_moments}

Here we derive the momentum-moment identities used in Secs.~\ref{sec:lattice_hamiltonian} and~\ref{sec:quantum_estimators}. We keep the derivation algebraic, because the same manipulations extend directly to higher-order moments.

\subsection{PBC translation moments}\label{appsub:pbc_translation_moments}

Let $\hat{K} = \hat{A}+\hat{A}^{\dagger}-2\hat{\mathds{1}}$, where $\hat{A}$ is the cyclic translation satisfying $\hat{A}^{\dagger}=\hat{A}^{-1}$ and $\hat{A}^{N}=\hat{\mathds{1}}$. The PBC momentum-squared operator is
\begin{eqnarray}
\hat{P}_{\rm P}^{2} = -\frac{\hbar^2}{\Delta_{\rm P}^{2}}\hat{K}.
\label{eq:app_P2P_K}
\end{eqnarray}
Thus, for any positive integer $r$,
\begin{eqnarray}
\hat{P}_{\rm P}^{2r} \equiv (\hat{P}_{\rm P}^{2})^r = \left( -\frac{\hbar^2}{\Delta_{\rm P}^{2}} \right)^r \hat{K}^r.
\label{eq:app_general_PBC_moment_K}
\end{eqnarray}
Since $\hat{A}$ and $\hat{A}^{\dagger}$ commute, the binomial expansion gives
\begin{eqnarray}
\hat{K}^r &=& \sum_{q=0}^{r} \binom{r}{q} (-2)^{r-q} (\hat{A}+\hat{A}^{\dagger})^q \nonumber \\
	&=& \sum_{q=0}^{r} \binom{r}{q} (-2)^{r-q} \sum_{s=0}^{q} \binom{q}{s} \hat{A}^{q-2s}.
\label{eq:app_general_K_expansion}
\end{eqnarray}
Here, the negative powers are interpreted as powers of $\hat{A}^{\dagger}$, i.e., $\hat{A}^{-\ell}=\hat{A}^{\dagger\ell}$. Taking the expectation in a state $\ket{\psi}$ therefore reduces every PBC moment to translation moments $\langle \hat{A}^{\ell}\rangle$.

For $r=1$,
\begin{eqnarray}
\hat{P}_{\rm P}^{2} = -\frac{\hbar^2}{\Delta_{\rm P}^{2}} (\hat{A}+\hat{A}^{\dagger}-2\hat{\mathds{1}}),
\label{eq:app_PBC_P2_explicit}
\end{eqnarray}
which is Eq.~(\ref{eq:P2_pbc_operator}). For $r=2$,
\begin{eqnarray}
\hat{K}^2 &=& (\hat{A}+\hat{A}^{\dagger}-2\hat{\mathds{1}})^2 = \hat{A}^2 + \hat{A}^{\dagger 2} - 4\hat{A} - 4\hat{A}^{\dagger} + 6\hat{\mathds{1}},
\label{eq:app_K_squared}
\end{eqnarray}
and therefore
\begin{eqnarray}
\hat{P}_{\rm P}^{4} = \frac{\hbar^4}{\Delta_{\rm P}^{4}} \left( \hat{A}^2 + \hat{A}^{\dagger 2} - 4\hat{A} - 4\hat{A}^{\dagger} + 6\hat{\mathds{1}} \right),
\label{eq:app_PBC_P4_explicit}
\end{eqnarray}
which is Eq.~(\ref{eq:P4_pbc_operator}).

It is useful to introduce
\begin{eqnarray}
m_\ell = {\rm Re}\bra{\psi}\hat{A}^{\ell}\ket{\psi}.
\label{eq:app_m_ell_definition}
\end{eqnarray}
Then,
\begin{eqnarray}
\langle\hat{P}_{\rm P}^{2}\rangle &=& \frac{2\hbar^2}{\Delta_{\rm P}^{2}}(1 - m_1), \nonumber \\
\langle\hat{P}_{\rm P}^{4}\rangle &=& \frac{\hbar^4}{\Delta_{\rm P}^{4}}(2 m_2 - 8m_1 + 6),
\label{eq:app_PBC_m1_m2}
\end{eqnarray}
as used in Sec.~\ref{sec:quantum_estimators}.

\subsection{DBC as an open-chain correction to the cyclic shift}\label{appsub:dbc_open_chain_derivation}

Define the non-unitary open-chain forward shift
\begin{eqnarray}
\hat{B} = \sum_{j=0}^{N-2}\ketbra{j+1}{j}.
\label{eq:app_B_open_shift}
\end{eqnarray}
It shifts all interior grid points forward except that it does not connect $\ket{N-1}$ back to $\ket{0}$. The cyclic shift can be decomposed as
\begin{eqnarray}
\hat{A} &=& \hat{B}+\ketbra{0}{N-1}, \nonumber \\
\hat{A}^{\dagger} &=& \hat{B}^{\dagger}+\ketbra{N-1}{0}.
\label{eq:app_A_B_decomposition}
\end{eqnarray}
With $\hat{E}_0 = \ketbra{N-1}{0}+\ketbra{0}{N-1}$, we have
\begin{eqnarray}
\hat{A}+\hat{A}^{\dagger} - \hat{E}_0 = \hat{B} + \hat{B}^{\dagger}.
\label{eq:app_A_minus_E0_open_shift}
\end{eqnarray}
Therefore, we verify
\begin{eqnarray}
\hat{A} + \hat{A}^{\dagger} - 2\hat{\mathds{1}} - \hat{E}_0 = \hat{B} + \hat{B}^{\dagger} - 2\hat{\mathds{1}},
\label{eq:app_open_laplacian_identity}
\end{eqnarray}
which is exactly the open-chain second-difference operator. The DBC momentum operator is thus
\begin{eqnarray}
\hat{P}_{\rm D}^{2} &=& -\frac{\hbar^2}{\Delta_{\rm D}^{2}} (\hat{B}+\hat{B}^{\dagger}-2\hat{\mathds{1}}) = -\frac{\hbar^2}{\Delta_{\rm D}^{2}}(\hat{K}-\hat{E}_0),
\label{eq:app_P2D_open_chain_identity}
\end{eqnarray}
where $\hat{K}=\hat{A}+\hat{A}^{\dagger}-2\hat{\mathds{1}}$. Equivalently,
\begin{eqnarray}
\hat{P}_{\rm D}^{2} = \hat{P}_{\rm cyc}^{2}(\Delta_{\rm D}) + \frac{\hbar^2}{\Delta_{\rm D}^{2}}\hat{E}_0.
\label{eq:app_P2D_cyclic_plus_E0}
\end{eqnarray}
By taking the expectation values, we have Eq.~(\ref{eq:P2_dbc_expectation}).

\subsection{Full derivation of the DBC fourth moment}\label{appsub:dbc_fourth_moment_derivation}

The fourth DBC moment is the square of $\hat{P}_{\rm D}^{2}$:
\begin{eqnarray}
\hat{P}_{\rm D}^{4} = \left( \hat{P}_{\rm D}^{2}\right)^2 = \left(\frac{\hbar^2}{\Delta_{\rm D}^{2}}\right)^2(\hat{K} - \hat{E}_0)^2.
\label{eq:app_P4D_start}
\end{eqnarray}
Expanding the non-commuting product gives
\begin{eqnarray}
(\hat{K}-\hat{E}_0)^2 = \hat{K}^2 - \hat{K}\hat{E}_0 - \hat{E}_0\hat{K} + \hat{E}_0^2.
\label{eq:app_Q_squared_expansion}
\end{eqnarray}
The first term is the cyclic fourth-moment operator evaluated with the DBC grid spacing. The mixed terms are
\begin{eqnarray}
\hat{K}\hat{E}_0 + \hat{E}_0\hat{K} &=& (\hat{A}+\hat{A}^{\dagger}-2\hat{\mathds{1}})\hat{E}_0 + \hat{E}_0(\hat{A}+\hat{A}^{\dagger}-2\hat{\mathds{1}}) \nonumber\\
	&=& \hat{A}\hat{E}_0 + \hat{A}^{\dagger}\hat{E}_0 + \hat{E}_0\hat{A}  + \hat{E}_0\hat{A}^{\dagger} - 4\hat{E}_0.
\label{eq:app_mixed_terms_expanded}
\end{eqnarray}
Using $\hat{E}_1 = \hat{A}\hat{E}_0 + \hat{E}_0\hat{A}^{\dagger}$ and $\hat{E}_2 = \hat{E}_0\hat{A} + \hat{A}^{\dagger}\hat{E}_0$, we find the following:
\begin{eqnarray}
\hat{K}\hat{E}_0 + \hat{E}_0\hat{K} =  \hat{E}_1 + \hat{E}_2 - 4\hat{E}_0.
\label{eq:app_mixed_terms_E1_E2}
\end{eqnarray}
Therefore, we have
\begin{eqnarray}
\hat{P}_{\rm D}^{4} = \hat{P}_{\rm cyc}^{4}(\Delta_{\rm D}) + \frac{\hbar^4}{\Delta_{\rm D}^{4}} \left( 4\hat{E}_0 - \hat{E}_1 - \hat{E}_2 + \hat{E}_0^2 \right),
\label{eq:app_P4D_full_operator}
\end{eqnarray}
and taking expectation values gives Eq.~(\ref{eq:P4_dbc_expectation}).

For $N>2$, the explicit endpoint forms follow directly. First, we have
\begin{eqnarray}
\hat{E}_0^2 &=& \bigl( \ketbra{N-1}{0} + \ketbra{0}{N-1} \bigr)^2 = \ketbra{N-1}{N-1} + \ketbra{0}{0}.
\label{eq:app_E0_square_derivation}
\end{eqnarray}
Next,
\begin{eqnarray}
\hat{A}\hat{E}_0 &=& \ketbra{0}{0} + \ketbra{1}{N-1}, \nonumber \\
\hat{E}_0\hat{A}^{\dagger} &=& \ketbra{N-1}{1} + \ketbra{0}{0},
\label{eq:app_AE0_E0Adag_derivation}
\end{eqnarray}
which gives
\begin{eqnarray}
\hat{E}_1 = 2\ketbra{0}{0} + \ketbra{1}{N-1} + \ketbra{N-1}{1}.
\label{eq:app_E1_explicit_derivation}
\end{eqnarray}
Similarly,
\begin{eqnarray}
\hat{E}_0\hat{A} &=& \ketbra{N-1}{N-1} + \ketbra{0}{N-2}, \nonumber \\
\hat{A}^{\dagger}\hat{E}_0 &=& \ketbra{N-2}{0} + \ketbra{N-1}{N-1},
\label{eq:app_E0A_AdagE0_derivation}
\end{eqnarray}
and hence, we have
\begin{eqnarray}
\hat{E}_2 &=& 2\ketbra{N-1}{N-1} + \ketbra{0}{N-2} + \ketbra{N-2}{0}.
\label{eq:app_E2_explicit_derivation}
\end{eqnarray}
These expressions show explicitly why the DBC correction through $\hat{P}^{4}$ is boundary-local.

\section{Shot-noise and measurement-cost analysis}\label{app:shot_noise}

This appendix derives the leading finite-shot scaling of the kinetic-energy estimators. All estimates below assume the independent shot sets for different measurement primitives. The correlated or shared-shot estimators can reduce the constant prefactors but do not change the $M^{-1/2}$ scaling of the root-mean-square error.

\subsection{Translation-moment variance}\label{appsub:translation_variance}

The Hadamard estimator for $m_l = {\rm Re}\langle \hat{A}^l \rangle$ produces a binary random variable $Z_l \in \{-1,+1\}$ with
\begin{eqnarray}
\mathbb{E}[Z_l] = m_l,
\quad
{\rm Var}(Z_l) = 1 - m_l^2.
\label{eq:app_binary_translation_variance}
\end{eqnarray}
With $M_l$ shots,
\begin{eqnarray}
\widetilde{m}_l = \frac{1}{M_l}\sum_{r=1}^{M_l}Z_{l,r}, \nonumber \\
{\rm Var}(\widetilde{m}_l) = \frac{1 - m_l^2}{M_l} \leq \frac{1}{M_l}.
\label{eq:app_m_l_variance}
\end{eqnarray}

For PBC, define
\begin{eqnarray}
\alpha_{\rm P} = \frac{\hbar^2}{m\Delta_{\rm P}^{2}},
\quad
\beta_{\rm P} = \frac{\hbar^4}{m^3 c^2\Delta_{\rm P}^{4}}.
\label{eq:app_aP_bP}
\end{eqnarray}
Then, Eq.~(\ref{eq:T_pbc_moments}) can be written as
\begin{eqnarray}
T_{\rm P} = \alpha_{\rm P}(1-m_1) - \frac{ \beta_{\rm P}}{4} (m_2 - 4m_1 + 3).
\label{eq:app_TP_linear_moments}
\end{eqnarray}
The linear error propagation formula gives
\begin{eqnarray}
{\rm Var}\bigl(\widetilde{T}_{\rm P} \bigr) &=& \left(-\alpha_{\rm P}+\beta_{\rm P}\right)^2{\rm Var}(\widetilde{m}_1) + \frac{1}{16} \beta_{\rm P}^2{\rm Var}(\widetilde{m}_2),
\label{eq:app_var_TP_exact_linear}
\end{eqnarray}
assuming independent estimates of $m_1$ and $m_2$. Hence,
\begin{eqnarray}
{\rm Var}\bigl(\widetilde{T}_{\rm P} \bigr) \leq \frac{\left(-\alpha_{\rm P}+\beta_{\rm P}\right)^2}{M_1} + \frac{\beta_{\rm P}^2}{16 M_2}.
\label{eq:app_var_TP_bound}
\end{eqnarray}
If $M_1$ and $M_2$ are both proportional to a common shot budget $M$, then ${\rm RMSE}(\widetilde{T}_{\rm P})=O(M^{-1/2})$.

\subsection{Boundary-overlap variance}\label{appsub:boundary_variance}

A probability $P$ estimated from $M$ independent projection measurements has Bernoulli variance
\begin{eqnarray}
{\rm Var}(\widetilde{P}) = \frac{P(1-P)}{M} \leq \frac{1}{4M}.
\label{eq:app_bernoulli_probability_variance}
\end{eqnarray}
For $B_{fg} = 2P_{fg}^{+}-P_f-P_g$, using the independent estimates of $P_{fg}^{+}$, $P_f$, and $P_g$ with $M$ shots each gives
\begin{eqnarray}
{\rm Var}(\widetilde{B}_{fg}) &=& 4{\rm Var}(\widetilde{P}_{fg}^{+}) + {\rm Var}(\widetilde{P}_{f}) + {\rm Var}(\widetilde{P}_{g}) \leq \frac{3}{2M}.
\label{eq:app_Bfg_variance_bound}
\end{eqnarray}
The bound is conservative because the endpoint probabilities can often be obtained from the same computational-basis samples used for the potential estimator.

For DBC, define
\begin{eqnarray}
a_{\rm D} = \frac{\hbar^2}{m\Delta_{\rm D}^{2}},
\quad
d_{\rm D} = \frac{\hbar^2}{2m\Delta_{\rm D}^{2}},
\quad
e_{\rm D} = \frac{\hbar^4}{8 m^3 c^2\Delta_{\rm D}^{4}}.
\label{eq:app_aD_dD_bD}
\end{eqnarray}
Combining Eqs.~(\ref{eq:P2_dbc_estimator_moments}) and (\ref{eq:P4_dbc_estimator_moments}), the DBC kinetic estimator can be written as
\begin{eqnarray}
\hspace{-2.5cm} T_{\rm D}  = a_{\rm D}(1-m_1) + d_{\rm D}b_0 - e_{\rm D}(2m_2 - 8m_1 + 6 + 4b_0 - b_1 - b_2 + b_{00}) \nonumber\\
\hspace{-2cm} 	= a_{\rm D}-6 e_{\rm D} + (-a_{\rm D} + 8 e_{\rm D})m_1 - 2 e_{\rm D}m_2 + (d_{\rm D} - 4 e_{\rm D})b_0 + e_{\rm D}b_1 + e_{\rm D} b_2 - e_{\rm D}b_{00}.
\label{eq:app_TD_linear_form}
\end{eqnarray}
The first term in the second line is a constant shift with respect to the sampled quantities, and therefore it does not contribute to the variance. Assuming independent estimates of the non-constant terms,
\begin{eqnarray}
\hspace{-1.5cm} {\rm Var}\left(\widetilde{T}_{\rm D}^{(4)}\right) &=& (-a_{\rm D} + 8e_{\rm D})^2{\rm Var}(\widetilde{m}_1) + (2e_{\rm D})^2{\rm Var}(\widetilde{m}_2) \nonumber\\
\hspace{-1.5cm} &&+ (d_{\rm D}-4e_{\rm D})^2{\rm Var}(\widetilde{b}_0) + e_{\rm D}^2{\rm Var}(\widetilde{b}_1) + e_{\rm D}^2{\rm Var}(\widetilde{b}_2) + e_{\rm D}^2{\rm Var}(\widetilde{b}_{00}).
\label{eq:app_var_TD_linear}
\end{eqnarray}
Each variance on the right-hand side is $O(M^{-1})$ when $M$ shots are allocated to each primitive, so
\begin{eqnarray}
{\rm RMSE}(\widetilde{T}_{\rm D}^{(4)}) = O(M^{-1/2}).
\label{eq:app_TD_RMSE_scaling}
\end{eqnarray}
The finite-shot simulations in Sec.~\ref{subsec:finite_shot_validation} confirm this scaling.

\subsection{Potential-energy sampling variance}\label{appsub:potential_variance_appendix}

To keep the notation consistent with Proposition~\ref{prop:potential_sampling}, we reserve $\hat{V}_{\tau}$ for the potential operator and use $\widetilde{V}_{\tau}$ for the Monte Carlo estimator. Let $J_1,\ldots,J_M$ be independent position-basis samples with ${\rm Pr}(J_r=j)=\abs{c_j}^2$. Define the sample-level random variables and their average by
\begin{eqnarray}
X_{r,\tau} := V\bigl(x_{J_r}^{(\tau)}\bigr),
\quad
\widetilde{V}_{\tau} := \frac{1}{M}\sum_{r=1}^{M}X_{r,\tau},
\label{eq:app_potential_random_variable}
\end{eqnarray}
where $r=1,\ldots,M$. Then,
\begin{eqnarray}
\mathbb{E}\bigl[\widetilde{V}_{\tau}\bigr] &=& \frac{1}{M}\sum_{r=1}^{M}\mathbb{E}[X_{r,\tau}] = \sum_{j=0}^{N-1}V\bigl(x_j^{(\tau)}\bigr)\abs{c_j}^2 = \langle\hat{V}_{\tau}\rangle, \nonumber \\
{\rm Var}(\widetilde{V}_\tau) &=& \frac{1}{M^2}\sum_{r=1}^{M}{\rm Var}(X_{r,\tau}) = \frac{{\rm Var}(X_{\tau})}{M}.
\label{eq:app_potential_variance_exact}
\end{eqnarray}
Here, the last equality uses the iid assumption, and $X_{\tau}$ denotes a generic copy of the single-shot variable $X_{r,\tau}$.
If $V(x_j^{(\tau)}) \in [V_{\min}, V_{\max}]$, then
\begin{eqnarray}
{\rm Var}(\widetilde{V}_\tau) \leq \frac{(V_{\max} - V_{\min})^2}{4M}.
\label{eq:app_potential_variance_bound_again}
\end{eqnarray}
The total-energy variance is obtained by adding the kinetic and potential variance contributions when independent shot sets are used.

\subsection{Measurement settings and circuit scaling}\label{appsub:measurement_cost}

Through the leading relativistic correction, the PBC kinetic estimator requires two translation-moment settings: $l=1$ and $l=2$. The DBC kinetic estimator uses the same two translation settings and adds the boundary-overlap settings for the three pairs
\begin{eqnarray}
(0,N-1), \quad (1,N-1), \quad (0,N-2),
\label{eq:app_boundary_pairs}
\end{eqnarray}
plus the endpoint probabilities, which can be obtained from computational-basis samples. Thus, up to $\hat{P}^{4}$, the number of boundary-specific DBC settings is independent of $N$.

The controlled implementation of $\hat{A}^l$ is a controlled modular increment by $l$ on an $L$-qubit register~\cite{Vedral1996}. For fixed $l$, a ripple-carry construction has gate count scaling polynomially and typically linearly in $L$ up to the architecture-dependent constants and ancilla choices. The QFT-based adders give an alternative implementation with a different constant and connectivity profile. The estimator framework does not depend on a particular adder design. It only requires that controlled cyclic translations can be implemented or otherwise estimated as unitary register operations.

\section{Boundary-overlap measurement identities}\label{app:boundary_overlap}

This appendix gives the probability identities underlying the boundary-overlap estimators. Let
\begin{eqnarray}
\ket{\psi} = \sum_{j=0}^{N-1}c_j\ket{j},
\quad
c_j=\braket{j}{\psi}.
\label{eq:app_overlap_state}
\end{eqnarray}
For two distinct computational basis states $\ket{f}$ and $\ket{g}$, define
\begin{eqnarray}
P_f &=& \abs{c_f}^2, \nonumber \\
P_g &=& \abs{c_g}^2, \nonumber \\
P_{fg}^{+} &=& \abs{\langle{s_{fg}^{+}}|{\psi}\rangle}^2,
\label{eq:app_Pf_Pg_Pplus}
\end{eqnarray}
where $|s_{fg}^{+}\rangle = ({\ket{f}+\ket{g}})/{\sqrt{2}}$. Then, we have
\begin{eqnarray}
P_{fg}^{+} &=& \frac{1}{2}\abs{c_f + c_g}^2 = \frac{1}{2} \left( \abs{c_f}^2 + \abs{c_g}^2 + c_f^{\ast} c_g + c_g^{\ast} c_f \right).
\label{eq:app_Pplus_expanded}
\end{eqnarray}
Consequently,
\begin{eqnarray}
2P_{fg}^{+}-P_f-P_g &=& c_f^{\ast} c_g + c_g^{\ast} c_f = \bra{\psi} \bigl( \ketbra{f}{g} + \ketbra{g}{f} \bigr) \ket{\psi}.
\label{eq:app_real_overlap_identity}
\end{eqnarray}
The factor of two multiplying $P_{fg}^{+}$ is a direct consequence of the normalisation of $|{s_{fg}^{+}}\rangle$.

The same construction can access imaginary coherences when needed. Define the phase-dependent superposition
\begin{eqnarray}
P_{fg}^{\phi} = \abs{\langle{s_{fg}^{\phi}}|{\psi}\rangle}^2,
\label{eq:app_phase_superposition}
\end{eqnarray}
where $|{s_{fg}^{\phi}}\rangle = {(\ket{f}+e^{i\phi}\ket{g})}/{\sqrt{2}}$. Then, we have
\begin{eqnarray}
&& 2P_{fg}^{\phi} - P_f - P_g  = e^{-i\phi}c_f^{\ast}c_g + e^{i\phi} c_g^{\ast} c_f = 2{\rm Re}\bigl(e^{-i\phi} c_f^{\ast} c_g\bigr).
\label{eq:app_phase_overlap_identity}
\end{eqnarray}
The choice $\phi=0$ gives the real coherence in Eq.~(\ref{eq:app_real_overlap_identity}), while $\phi=\pi/2$ gives
\begin{eqnarray}
&& 2P_{fg}^{\pi/2} - P_f - P_g  = 2{\rm Im}(c_f^{\ast}c_g) = \bra{\psi}\bigl( -i\ketbra{f}{g} + i\ketbra{g}{f} \bigr)\ket{\psi}.
\label{eq:app_imag_overlap_identity}
\end{eqnarray}
The DBC kinetic estimators in the main text require only the real coherence because the boundary operators $\hat{E}_0$, $\hat{E}_1$, and $\hat{E}_2$ are Hermitian combinations of off-diagonal projectors.

Using the shorthand
\begin{eqnarray}
B_{f, g} = 2P_{fg}^{+} - P_f - P_g,
\label{eq:app_Bfg_again}
\end{eqnarray}
the DBC boundary terms through $\hat{P}^{4}$ are
\begin{eqnarray}
\langle\hat{E}_0\rangle &=& B_{0,N-1}, \nonumber \\
\langle\hat{E}_0^2\rangle &=& P_0 + P_{N-1}, \nonumber \\
\langle\hat{E}_1\rangle &=& 2P_0 + B_{1,N-1}, \nonumber \\
\langle\hat{E}_2\rangle &=& 2P_{N-1} + B_{0,N-2}.
\label{eq:app_Es_Bfg}
\end{eqnarray}
Thus, the boundary correction can be obtained from endpoint probabilities and three real two-state coherences.

\section{Numerical methods}\label{app:numerical_methods}

This appendix describes the numerical procedures used for the validation figures in Sec.~\ref{sec:validation}. All numerical simulations use dense linear algebra because the purpose is operator validation rather than large-scale classical performance.

\subsection{Matrix construction}\label{appsub:matrix_construction}

For a given $L$, $N=2^L$. The cyclic translation matrix is
\begin{eqnarray}
(\hat{A})_{j,k} = \delta_{j,k+1 \ {\rm mod} \ N},
\label{eq:app_A_matrix_elements}
\end{eqnarray}
where row and column indices run from $0$ to $N-1$. The PBC momentum matrices are
\begin{eqnarray}
\hat{P}_{\rm P}^{2} = -\frac{\hbar^2}{\Delta_{\rm P}^{2}} (\hat{A}+\hat{A}^{\dagger}-2\hat{\mathds{1}}),
\quad
\hat{P}_{\rm P}^{4} = (\hat{P}_{\rm P}^{2})^2.
\label{eq:app_PBC_matrix_construction}
\end{eqnarray}
For DBC, the open-chain shift is
\begin{eqnarray} 
(\hat{B})_{j,k} = \delta_{j,k+1}
\quad
(0 \leq k \leq N-2),
\label{eq:app_B_matrix_elements}
\end{eqnarray}
with no wrap-around element. The DBC matrices are
\begin{eqnarray}
\hat{P}_{\rm D}^{2} = -\frac{\hbar^2}{\Delta_{\rm D}^{2}}(\hat{B}+\hat{B}^{\dagger}-2\hat{\mathds{1}}),
\quad
\hat{P}_{\rm D}^{4} = (\hat{P}_{\rm D}^{2})^2.
\label{eq:app_DBC_matrix_construction}
\end{eqnarray}
Equivalently, the DBC matrices can be constructed from the cyclic shift using $\hat{E}_0$ as in Eq.~(\ref{eq:app_P2D_cyclic_plus_E0}). Both constructions are numerically identical up to floating-point precision.

The lattice potential is the diagonal matrix
\begin{eqnarray}
\hat{V}_{\tau} = \sum_{j=0}^{N-1}V\bigl(x_j^{(\tau)}\bigr)\ketbra{j}{j},
\quad
\tau\in\{{\rm P},{\rm D}\}.
\label{eq:app_potential_matrix_construction}
\end{eqnarray}
The first-order total Hamiltonian used to generate the smooth-potential ground
states is
\begin{eqnarray}
\hat{H}^{tot}_{{\rm 1st},\tau} = \frac{\hat{P}_{\tau}^{2}}{2m}+\hat{V}_{\tau}.
\label{eq:app_Hnr_matrix}
\end{eqnarray}
The ground state is obtained by exact diagonalisation of the Hermitian matrix $\hat{H}_{\tau,{\rm nr}}$.

\subsection{Estimator reconstruction from a state vector}\label{appsub:estimator_reconstruction_state_vector}

Given a normalised state vector $\vec{c}=(c_0,\ldots,c_{N-1})^{T}$, the exact translation moments are computed as $m_l={\rm Re}(\vec{c}^{\,T}\hat{A}^l \vec{c}\,)$. To avoid ambiguity in implementation, this can be written component-wise as
\begin{eqnarray}
m_l = {\rm Re}\left( \sum_{j=0}^{N-1} c_j^{\ast}c_{j-l \ {\rm mod} \ N} \right) \quad (l=1,2),
\label{eq:app_numerical_ml}
\end{eqnarray}
for the convention $\hat{A}\ket{j}=\ket{j+1 \ {\rm mod} \ N}$. The PBC estimator reconstruction is then
\begin{eqnarray}
\langle\hat{P}_{\rm P}^{2}\rangle_{\rm est} &=& \frac{2\hbar^2}{\Delta_{\rm P}^{2}}(1-m_1), \nonumber \\
\langle\hat{P}_{\rm P}^{4}\rangle_{\rm est} &=& \frac{2\hbar^4}{\Delta_{\rm P}^{4}}(m_2-4m_1+3).
\label{eq:app_PBC_estimator_reconstruction}
\end{eqnarray}
For DBC, the boundary quantities are reconstructed as
\begin{eqnarray}
b_0 &=& 2{\rm Re}(c_0^{\ast}c_{N-1}), \nonumber \\
b_{00} &=& \abs{c_0}^2 + \abs{c_{N-1}}^2, \nonumber \\
b_1 &=& 2\abs{c_0}^2 + 2{\rm Re}(c_1^{\ast}c_{N-1}), \nonumber \\
b_2 &=& 2\abs{c_{N-1}}^2 + 2{\rm Re}(c_0^{\ast} c_{N-2}).
\label{eq:app_bs_numerical}
\end{eqnarray}
The DBC estimator reconstruction is
\begin{eqnarray}
\langle\hat{P}_{\rm D}^{2}\rangle_{\rm est} &=& \frac{2\hbar^2}{\Delta_{\rm D}^{2}}(1-m_1) + \frac{\hbar^2}{\Delta_{\rm D}^{2}}b_0, \nonumber \\
\langle\hat{P}_{\rm D}^{4}\rangle_{\rm est} &=& \frac{\hbar^4}{\Delta_{\rm D}^{4}} \left(2m_2-8m_1+6+4b_0-b_1-b_2+b_{00}\right).
\label{eq:app_P24D_estimator_reconstruction}
\end{eqnarray}
These expressions are compared with direct matrix expectations
\begin{eqnarray}
\langle\hat{P}_{\tau}^{2}\rangle_{\rm mat} =  \vec{c}^{\,T} \hat{P}_{\tau}^{2} \vec{c},
\quad
\langle\hat{P}_{\tau}^{4}\rangle_{\rm mat} = \vec{c}^{\,T} \hat{P}_{\tau}^{4} \vec{c}.
\label{eq:app_matrix_expectations}
\end{eqnarray}

\subsection{Benchmark parameters}\label{appsub:benchmark_parameters}

The PBC free-particle benchmark uses the Fourier mode $n=1$ with $R=10$, $\hbar=m=1$, $c=2$, and $L=3,4,\ldots,10$. For each $L$, the lattice momentum eigenvalue is
\begin{eqnarray}
p_{\rm lat}^{2}(n) = \frac{4\hbar^2}{\Delta_{\rm P}^{2}} \sin^2\left(\frac{\pi n}{N}\right),
\label{eq:app_pbc_free_plat}
\end{eqnarray}
and the continuum value is
\begin{eqnarray}
p_{\rm cont}^{2}(n) = \left(\frac{2\pi n\hbar}{R}\right)^2.
\label{eq:app_pbc_free_pcont}
\end{eqnarray}
The energies plotted in Fig.~\ref{fig:pbc_free_particle_dispersion} are $T_{\rm cont}$, $T_{\rm lat}$, and $T_{\rm pert}$ as defined in Eq.~(\ref{eq:T_validation}).

The DBC square-well benchmark uses $N=64$, $R=10$, $\hbar=m=1$, and compares modes $s=1,\ldots,20$. The analytic sine eigenvectors are
\begin{eqnarray}
\psi_s(j) = \sqrt{\frac{2}{N+1}} \sin\left(\frac{\pi s(j+1)}{N+1}\right),
\label{eq:app_dbc_sine_numerical}
\end{eqnarray}
with eigenvalues
\begin{eqnarray}
p_{{\rm D},{\rm lat}}^{2}(s) = \frac{4\hbar^2}{\Delta_{\rm D}^{2}} \sin^2\left(\frac{\pi s}{2(N+1)}\right).
\label{eq:app_dbc_square_eigenvalue}
\end{eqnarray}

The smooth-potential benchmark uses $R=10$, $\hbar=m=1$, $c=2$, and $L=4,5,\ldots,10$. For PBC, the potential is
\begin{eqnarray}
V_{\rm P}(x) = V_0\left(1-\cos\left(\frac{2\pi x}{R}\right)\right),
\label{eq:app_pbc_smooth_potential}
\end{eqnarray}
for $V_0=0.5$. For DBC, the potential is
\begin{eqnarray}
V_{\rm D}(x) = \frac{1}{2} m \omega^2\left(x-\frac{R}{2}\right)^2,
\label{eq:app_dbc_smooth_potential}
\end{eqnarray}
with $\omega=0.4$. For each grid, we diagonalise $\hat{H}^{tot}_{\tau,{\rm nr}}$ and evaluate $E_{{\rm nr},\tau}$, $\Delta E_{{\rm rel},\tau}$, and $E_{{\rm rel},\tau}^{(4)}$ on the ground state, as in Eq.~(\ref{eq:E_validations}).

The finite-shot benchmark uses the smooth-potential ground states at $L=5$ with the same physical parameters, allocates $M$ shots to each measurement primitive, and estimates the RMSE over independent Monte Carlo repetitions.
The discretized ground states $\ket{\psi_{0,\tau}}$ are represented by trigonometric and Gaussian wavefunctions for PBC and DBC, respectively.

\subsection{Error metrics}\label{appsub:error_metrics}

For a momentum moment, we use the relative discrepancy
\begin{eqnarray}
\epsilon_{P^{2r}} = \frac{ \abs{\langle\hat{P}_{\tau}^{2r}\rangle_{\rm est} - \langle\hat{P}_{\tau}^{2r}\rangle_{\rm mat}}}{\max\left\{ \abs{\langle\hat{P}_{\tau}^{2r}\rangle_{\rm mat}}, \; \epsilon_{\rm floor}\right\}},
\quad
r=1,2,
\label{eq:app_relative_moment_error}
\end{eqnarray}
where $\epsilon_{\rm floor}$ is a small numerical floor used only to avoid a zero denominator. For total energies, we use the absolute reconstruction error
\begin{eqnarray}
\epsilon_E = \abs{E_{\rm est} - E_{\rm mat}}.
\label{eq:app_absolute_energy_error}
\end{eqnarray}
For the weakly relativistic diagnostic, we monitor
\begin{eqnarray}
\eta_{\tau} = \frac{\langle\hat{P}_{\tau}^{2}\rangle}{m^2c^2}.
\label{eq:app_eta_tau_diagnostic}
\end{eqnarray}
This state-dependent quantity does not bound all higher moments, but it is a useful indicator that the low-momentum expansion is being applied in the intended regime. 
A more conservative diagnostic is the spectral-support condition
\begin{eqnarray}
\eta_{\tau, {\rm max}} = \max_{\lambda \in {\rm supp}(\psi)}\frac{\lambda}{m^2 c^2},
\label{eq:app_spectral_eta}
\end{eqnarray}
where $\lambda$ runs over eigenvalues of $\hat{P}_{\tau}^{2}$ that have non-negligible weight in the state $\ket{\psi}$.

\section{Higher-order extensions and additional benchmarks}\label{app:higher_order}

This appendix records two natural extensions of the framework: higher-order relativistic moments and higher-order finite-difference stencils. These extensions are not required for the main validation, but they clarify how the present construction generalises.

\subsection{Including the $\hat{P}^{6}$ relativistic correction}\label{appsub:P6_extension}

The next term in the positive-energy expansion is
\begin{eqnarray}
\hat{T}^{(6)}_\tau = \frac{\hat{P}_\tau^2}{2m} - \frac{\hat{P}_\tau^4}{8 m^3 c^2} + \frac{\hat{P}_\tau^6}{16 m^5 c^4},
\label{eq:app_T6_definition}
\end{eqnarray}
where $\hat{P}_\tau^6=(\hat{P}_\tau^2)^3$. For PBC, using $\hat{K}=\hat{A}+\hat{A}^{\dagger}-2\hat{\mathds{1}}$ and $\hat{P}_{\rm P}^{2}=-(\hbar^2/\Delta_{\rm P}^{2})\hat{K}$, one finds
\begin{eqnarray}
\hat{P}_{\rm P}^{6} =
\frac{\hbar^6}{\Delta_{\rm P}^{6}} \left[ 20\hat{\mathds{1}} - 15\left(\hat{A}+\hat{A}^{\dagger}\right) + 6\left(\hat{A}^{2}+\hat{A}^{\dagger 2}\right) - \left( \hat{A}^{3}+\hat{A}^{\dagger 3} \right) \right].
\label{eq:app_PBC_P6_operator}
\end{eqnarray}
Hence,
\begin{eqnarray}
\langle\hat{P}_{\rm P}^{6}\rangle = \frac{\hbar^6}{\Delta_{\rm P}^{6}} \left( 20 - 30m_1 + 12m_2 - 2m_3 \right),
\label{eq:app_PBC_P6_moments}
\end{eqnarray}
where $m_3={\rm Re}\langle\hat{A}^{3}\rangle$. Thus, the $\hat{P}^{6}$ term requires one additional translation moment under PBC.

For DBC,
\begin{eqnarray}
\hat{P}_{\rm D}^{6} = -\frac{\hbar^6}{\Delta_{\rm D}^{6}}\left(\hat{K}-\hat{E}_0\right)^3.
\label{eq:app_DBC_P6_formal}
\end{eqnarray}
Because $\hat{K}$ and $\hat{E}_0$ do not commute, the expansion contains all ordered products of $\hat{K}$ and $\hat{E}_0$. The resulting correction terms remain local near the endpoints, but they involve a wider boundary stencil than the $\hat{P}^{4}$ correction. In particular, terms can connect endpoint basis states to sites up to three lattice steps away. This is the natural higher-order analogue of the $\hat{E}_0$, $\hat{E}_1$, and $\hat{E}_2$ terms appearing at fourth order.

\subsection{Higher-order finite-difference stencils}\label{appsub:higher_order_stencils}

The main text uses the standard second-order finite-difference approximation to $-\partial_x^2$. A fourth-order accurate PBC stencil for the momentum squared is
\begin{eqnarray}
\hat{P}_{\rm P,4th}^{2} = \frac{\hbar^2}{\Delta_{\rm P}^{2}} \left[ \frac{5}{2}\hat{\mathds{1}} - \frac{4}{3}\left(\hat{A}+\hat{A}^{\dagger}\right) + \frac{1}{12}\left(\hat{A}^{2}+\hat{A}^{\dagger 2}\right) \right].
\label{eq:app_fourth_order_stencil}
\end{eqnarray}
This reduces the finite-grid discretization error for smooth states but requires additional translation moments even for the first-order relativistic kinetic energy term. Squaring this operator to obtain $\hat{P}^{4}$ introduces moments up to $\hat{A}^{4}$. Under DBC, the same stencil requires the boundary corrections that remove wrap-around couplings associated with both $\hat{A}$ and $\hat{A}^{2}$. Thus, higher-order stencils trade discretization accuracy for a larger but still structured observable set.




\end{document}